\documentclass[10pt]{article}
\usepackage[a4paper,margin=1in]{geometry}
\usepackage[T1]{fontenc}
\usepackage[utf8]{inputenc}
\usepackage{amsfonts,latexsym,amsmath,amsthm,amssymb}
\usepackage{newtxtext,newtxmath}
\usepackage[numbers,sort&compress]{natbib}
\usepackage[colorlinks,citecolor=black,urlcolor=black,linkcolor=black]{hyperref}
\hypersetup{
  pdftitle={Exact Risk-Complexity Laws for Projective Boundaries in Scenario Optimization and Distribution-Free Certification},
  pdfauthor={Giuseppe C. Calafiore},
  pdfkeywords={scenario optimization, distribution-free certification, conformal prediction, finite-sample risk, sample compression}
}
\usepackage{graphicx}
\usepackage{xcolor}
\usepackage{booktabs}
\usepackage{enumitem}
\usepackage{microtype}
\graphicspath{{./}{figures/}}
\allowdisplaybreaks
\setlist{nosep}

\theoremstyle{plain}
\newtheorem{theorem}{Theorem}[section]
\newtheorem{lemma}[theorem]{Lemma}
\newtheorem{proposition}[theorem]{Proposition}
\newtheorem{corollary}[theorem]{Corollary}

\theoremstyle{definition}
\newtheorem{definition}[theorem]{Definition}
\newtheorem{assumption}[theorem]{Assumption}
\newtheorem{example}[theorem]{Example}

\theoremstyle{remark}
\newtheorem{remark}[theorem]{Remark}

\newcommand{\Pp}{\mathbb P}
\newcommand{\Ee}{\mathbb E}
\newcommand{\one}{\mathbf 1}
\newcommand{\cZ}{\mathcal Z}

\newcommand{\cP}{\mathcal P}
\newcommand{\Rr}{\mathbb R}
\newcommand{\eps}{\varepsilon}
\DeclareMathOperator{\argmin}{argmin}
\DeclareMathOperator{\Beta}{Beta}

\title{Exact Risk-Complexity Laws for Projective Boundaries in Scenario Optimization and Distribution-Free Certification}
\author{Giuseppe C. Calafiore\\
\small Department of Electronics and Telecommunications, Politecnico di Torino, Italy\\
\small Corresponding author: \href{mailto:giuseppe.calafiore@polito.it}{giuseppe.calafiore@polito.it}}
\date{}

\begin{document}
\maketitle

\begin{abstract}
Scenario optimization, conformal prediction, and related distribution-free certification methods use finite samples to construct decisions or prediction sets with violation-risk guarantees for fresh observations. In several classical settings, the conditional violation risk follows an exact beta law, whose tail has a beta-binomial representation and whose parameter is a support, calibration, or compression dimension. This paper identifies the deterministic boundary mechanism behind these formulas and derives the corresponding law when the observed boundary size is random. A decision rule is represented by an acceptance set for future observations, together with a boundary map selecting the sample points responsible for that set. The resulting pair is called a {\em proper projective boundary scheme} when held-out samples are accepted precisely if the full-sample boundary is retained, and accepted non-boundary samples can be deleted without changing that boundary. For every such scheme, the conditional law of the violation risk given the observed boundary size is determined by the boundary's cross-sample complexity profile. A stable profile yields the usual beta law, whereas a varying profile produces an exact profile correction. The framework covers scalar order-statistic calibration, support-reconstructive scenario programs, cascaded support-removal certificates, coordinatewise envelopes, and Pareto-frontier calibration with vector scores. It also yields conditional probabilistic certificates and a no-go result explaining why observed complexity alone is insufficient.
\end{abstract}

\noindent\textbf{Keywords:} scenario optimization; distribution-free certification; conformal prediction; finite-sample risk; sample compression.

\medskip
\noindent\textbf{Mathematics Subject Classification (2020):} Primary 90C15, 62G15; Secondary 90C90, 68Q32.

\section{Introduction}\label{sec:introduction}

This paper studies exact finite-sample risk-complexity laws for randomized decision rules and distribution-free certification methods.  Scenario optimization and conformal prediction are two motivating examples.  In both settings, a finite sample is used to construct an acceptance set for a future observation, and the central performance quantity is the conditional probability that a fresh observation is not accepted.  The aim is to identify when this risk has an exact beta law and what replaces that law when the effective boundary size is random.

Scenario optimization and conformal prediction use samples in different ways but often lead to similar finite-sample formulas.  In scenario optimization, sampled constraints replace an uncertain constraint, and one studies the probability that the optimizer violates a fresh constraint; see \cite{CalafioreCampi2005,CalafioreCampi2006,CampiGaratti2008,CalafioreSIOPT2010,CampiGaratti2011,CampiGaratti2018,GarattiCampi2022,GarattiCampi2025}.  In conformal prediction, calibration data are used to build prediction sets with finite-sample coverage under weak distributional assumptions; see \cite{PapadopoulosEtAl2002,VovkGammermanShafer2005,ShaferVovk2008,LeiEtAl2018,AngelopoulosBates2023}.  Recent work also studies multivariate conformal prediction, risk control, and links between conformal and scenario methods; see \cite{DheurEtAl2025,JohnstoneNdiaye2025,TawachiLauferGoldshtein2025,OSullivanRomaoMargellos2026}.  In both areas, beta-binomial expressions occur naturally.

For example, in a nondegenerate convex scenario program with \(N\) sampled constraints and deterministic support size \(s\), the violation probability \(V_N\) satisfies
\begin{equation}\label{eq:intro-scenario-tail}
  \Pp\{V_N>\eps\}
  =
  \sum_{i=0}^{s-1}\binom{N}{i}\eps^i(1-\eps)^{N-i}.
\end{equation}
A scalar split-conformal predictor with a continuous nonconformity score has a risk law with the same beta form.  If the threshold is the \(s\)-th largest calibration score, then the conditional miscoverage probability has distribution \(\Beta(s,N-s+1)\), which is the standard order-statistic law; see \cite{DavidNagaraja2003}.

We argue that a deterministic boundary property of the sample is the common mechanism behind these formulas, rather than convexity, scalar scoring, or the observed support size alone.  Informally, a new observation is rejected exactly when it would become one of the observations that determines the decision after augmentation.  The term ``boundary'' is used here as a common name for support constraints and essential sets in scenario optimization, for compression sets in learning theory, and for the calibration observations that determine a split-conformal quantile.  The formal concept is given in Section~\ref{sec:proper-boundaries}.

The size of the boundary set is a key complexity indicator, and many useful procedures have random boundary size.  One example is conformal prediction with vector-valued scores.  Suppose a calibration example has a score in \(\Rr_+^d\), where the coordinates represent different residuals, losses, or safety margins.  A Pareto-frontier rule accepts a candidate if its vector score is componentwise no worse than at least one calibration score.  The boundary is then the empirical set of nondominated calibration scores, and its size is random.  Conditioning on the observed frontier size by itself does not give a beta law.

The risk-complexity theory of \cite{GarattiCampi2022,GarattiCampi2023Conditional,CampiGaratti2023JMLR} already shows that the observed complexity carries information about out-of-sample risk, and also that conditional risk statements based only on the observed complexity require additional information.  The present paper gives a boundary-level version of that message.  It identifies the exact profile information needed to pass from an observed boundary size to a conditional risk law.

The main object is a decision rule \(\Gamma\) that maps a finite sample to an acceptance set.  A future point is accepted if it belongs to this set, and is a violation otherwise.  To the decision rule, we associate a boundary map \(B_n\), which selects the indices of the samples that determine the decision.  The pair \((\Gamma,B)\) is a \emph{proper projective boundary scheme} if it satisfies two deterministic conditions: all held-out samples are accepted if and only if the full boundary is retained, and once the boundary is retained, removing accepted non-boundary samples does not change the boundary.  These conditions are stated for abstract schemes and cover convex and nonconvex optimization settings, together with applications outside optimization.

Under these assumptions, if
\[
  V_N=\Pp\{Z\notin \Gamma(S_N)\mid S_N\},\qquad
  K_N=|B_N(S_N)|,\qquad
  p_n(k)=\Pp\{K_n=k\},
\]
then for every \(M\ge0\),
\begin{equation}\label{eq:intro-main}
  \Ee[(1-V_N)^M\mid K_N=k]
  =
  \frac{p_{N+M}(k)}{p_N(k)}
  \frac{\binom{N}{k}}{\binom{N+M}{k}},
\end{equation}
whenever \(p_N(k)>0\).  The conditional law of \(V_N\) is therefore determined by the complexity profile \(\{p_n(k)\}_{n\ge k}\).  If the profile is stable at the observed value, the profile factor disappears and \(V_N\mid\{K_N=k\}\sim\Beta(k,N-k+1)\).  If the profile changes with the sample size, the beta law is generally wrong after conditioning on \(K_N=k\).

The consequences are useful in both scenario optimization and distribution-free prediction.  First, the result gives a diagnostic for deciding when familiar beta-binomial certificates are exact: the relevant boundary must be proper and projective, and its cross-sample profile must be stable at the observed complexity.  Second, it gives the exact profile-corrected law for random-boundary procedures, rather than treating a random frontier or support size as a fixed dimension.  Third, it turns conditional risk certification into a finite-sample problem of computing, bounding, or estimating the complexity profile.  Fourth, the no-go result in Section~\ref{sec:no-go} shows that this extra profile information is necessary for nontrivial distribution-free conditional guarantees.

The paper has four main aims.  It states boundary equivalence and projectivity in a form that can be checked directly.  It proves the exact moment law \eqref{eq:intro-main} and the corresponding profile-based conditional certificate.  It verifies the assumptions for scalar order-statistic calibration, support-reconstructive scenario programs, a coordinatewise random-support envelope, and Pareto-frontier calibration with vector scores.  It also clarifies the role of discarded samples: essential projective discards may enter the sharp law, while generic violated-discard procedures call for specialized scenario-discarding bounds or conservative compression bounds; see \cite{CalafioreSIOPT2010,CampiGaratti2011,RomaoPapachristodoulouMargellos2023,RomaoMargellosPapachristodoulou2023}.

The paper is therefore a finite-sample risk-complexity result for abstract decision rules, with scenario optimization as a central optimization instance and conformal prediction as a parallel distribution-free instance.  The profile \(p_n(k)\) plays the role of a complexity law for the calibration or decision rule.  In elementary schemes it is analytic; in structured schemes it can often be bounded by deterministic arguments; and in simulator-access settings it can be estimated with simultaneous finite-sample bands.  These routes are made explicit in Section~\ref{subsec:profile-computation}.

Longer proofs and verifications are collected in the appendices.  Appendix~\ref{app:padded-proof} proves the bounded-boundary result.  Appendix~\ref{app:cascaded-support} treats cascaded support removal, while Appendices~\ref{app:primitive} and~\ref{app:inner} give a reusable verification lemma and the inner-certificate result.

\section{Setup}\label{sec:setup}

Let \((\cZ,\mathcal A)\) be the measurable space of one observation.  In a
scenario program, \(z\) is typically an uncertainty realization; in conformal
prediction, \(z\) may be a labelled example, a residual, or a calibration score.
Throughout the single-risk part of the paper, \(Z_1,Z_2,\ldots\) are i.i.d.
with common law \(P\), and \(S_n=(Z_1,\ldots,Z_n)\).  For deterministic data we
write \(T_n=(z_1,\ldots,z_n)\).  If \(I\subseteq[n]=\{1,\ldots,n\}\), then
\(T_I\) denotes the corresponding subcollection, written in a fixed deterministic
order.  All rules are assumed to be permutation invariant; this ordering is only
a notational device.

A decision rule is a measurable set-valued map \(\Gamma\) that sends a finite
data set \(T_I\) to an acceptance set
$
  \Gamma(T_I)\in\mathcal A$.
Equivalently, the indicator \(\one\{z\in\Gamma(T_I)\}\) is jointly measurable in
\((T_I,z)\).  A fresh observation \(z\) is accepted when
\(z\in\Gamma(T_I)\) and is a violation when \(z\notin\Gamma(T_I)\).  At sample
size \(N\), the conditional violation risk is
\begin{equation}\label{eq:risk}
  V_N:=\Pp\{Z\notin\Gamma(S_N)\mid S_N\},
\end{equation}
where \(Z\sim P\) is independent of \(S_N\).

In a scenario program, \(\Gamma(S_N)\) may be
\[
  \{\delta:g(x^*(S_N),\delta)\le0\},
\]
where \(x^*(S_N)\) is the optimizer returned by the sampled problem and \(g\) is
the constraint function.  Then \(V_N\) is the usual violation probability.  In
split conformal prediction, \(\Gamma(S_N)\) is a set of future examples, labels,
or score vectors accepted by the calibration rule.  Then \(V_N\) is the
conditional miscoverage probability.

We use the following standing conventions.  Ties are resolved by a fixed
measurable tie-breaker that is equivariant under permutations, or else excluded
by a non-atomicity assumption.  Boundary maps are assumed measurable, so events
such as \(\{K_n=k\}\) are well defined.  These regularity assumptions are standard in scenario and conformal arguments.  A construction may also be verified on a measurable regularity class of probability one, provided that the identities hold simultaneously for all subcollections used in the exchangeability argument.  Since only finitely many subcollections occur at each sample size, the probabilistic conclusions are unchanged.

\section{Proper Projective Boundaries}\label{sec:proper-boundaries}

A \emph{boundary} is the part of the data that is responsible for the decision.
The definition below is deterministic and valid for every finite sample
size in the range where the scheme is used; the random results later come only
from applying the deterministic statement to i.i.d. data.

\begin{definition}[Boundary map and complexity]
A boundary map is a permutation-e\-qui\-va\-riant rule that assigns to every finite
data set \(T_n\) a subset
$
  B_n(T_n)\subseteq[n]$.
The boundary complexity is
$
  K_n(T_n):=|B_n(T_n)|$.
For random samples, write \(K_n:=K_n(S_n)\) and
\[
  p_n(k):=\Pp\{K_n=k\}.
\]
The sequence \(\{p_n(k):n\ge k\}\) is the complexity profile at level \(k\).
\end{definition}

\begin{assumption}[Boundary equivalence]\label{ass:boundary-equivalence}
For every deterministic data set \(T_n=(z_1,\ldots,z_n)\) and every split
\(I\subseteq[n]\), \(J=[n]\setminus I\),
\begin{equation}\label{eq:boundary-equivalence}
  z_j\in\Gamma(T_I)\ \text{for all }j\in J
  \quad\Longleftrightarrow\quad
  B_n(T_n)\subseteq I.
\end{equation}
\end{assumption}

The left side says that the points left out of the design set all pass the
decision trained on the design set.  The right side says that none of the
left-out points is needed in the full-sample boundary.  Thus a held-out
violation is exactly a point that would enter the full boundary.

\begin{assumption}[Boundary projectivity]\label{ass:boundary-projectivity}
For every deterministic data set \(T_n\) and every \(I\subseteq[n]\), if
\(B_n(T_n)\subseteq I\), then
\begin{equation}\label{eq:boundary-projectivity}
  B_{|I|}(T_I)=B_n(T_n)
\end{equation}
after the natural re-indexing from \(T_I\) back to \(T_n\).
\end{assumption}

Projectivity says that, once all boundary samples have been kept, deleting
accepted non-boundary samples does not change the reported boundary.  This
rules out artificial complexities that depend on irrelevant accepted samples.

\begin{definition}[Proper projective boundary scheme]
A pair \((\Gamma,B)\) satisfying Assumptions
\ref{ass:boundary-equivalence} and \ref{ass:boundary-projectivity} is called a {\em
proper projective boundary scheme}.
\end{definition}

\begin{remark}[Discarded samples]
A discarded sample can be part of \(B_n\), but only if it is essential for the
certified decision.  For example, in scalar conformal prediction the discarded
upper-tail scores and the threshold-defining score form an order-statistic
boundary.  In a scenario program with a deterministic removal path, the removed
constraints may be boundary samples if they are needed to reconstruct that path
or the certified acceptance set.  A constraint is not a boundary point merely
because it was removed or because the final optimizer violates it.
\end{remark}

\section{The Exact Single-Risk Law}\label{sec:single-risk-law}

The following theorem is the main result of the paper.  It expresses the risk-complexity principle at the level of projective boundaries: the conditional risk law is controlled by how the observed boundary complexity changes when new samples are added.

\begin{theorem}[Exact projective-boundary law]\label{thm:exact-profile}
Assume that \((\Gamma,B)\) is a proper projective boundary scheme and that the
observations are i.i.d.  Fix \(N\ge1\), \(M\ge0\), and
\(k\in\{0,\ldots,N\}\).  Then
\begin{equation}\label{eq:unconditional-moment}
  \Ee\!\left[(1-V_N)^M\one_{\{K_N=k\}}\right]
  =
  \frac{\binom{N}{k}}{\binom{N+M}{k}}\,p_{N+M}(k).
\end{equation}
Consequently, if \(p_N(k)>0\), then
\begin{equation}\label{eq:conditional-moment}
  \Ee\!\left[(1-V_N)^M\mid K_N=k\right]
  =
  \frac{p_{N+M}(k)}{p_N(k)}
  \frac{\binom{N}{k}}{\binom{N+M}{k}}.
\end{equation}
\end{theorem}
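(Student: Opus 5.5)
The plan is to prove \eqref{eq:unconditional-moment} by a double-counting argument on the augmented sample \(S_{N+M}=(Z_1,\ldots,Z_{N+M})\), in the style of classical scenario-optimization arguments. Condition on \(S_N\). Since \(Z_{N+1},\ldots,Z_{N+M}\) are i.i.d.\ and independent of \(S_N\), the quantity \((1-V_N)^M\) equals the conditional probability that all of them lie in \(\Gamma(S_N)\). Hence
\[
\Ee\!\left[(1-V_N)^M\one_{\{K_N=k\}}\right]
=\Pp\bigl\{Z_j\in\Gamma(S_N)\ \forall j\in\{N+1,\ldots,N+M\},\ K_N=k\bigr\}.
\]

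Next, apply Assumption~\ref{ass:boundary-equivalence} to the data set \(T_{N+M}=S_{N+M}\) with \(I=[N]\) and \(J=\{N+1,\ldots,N+M\}\). This turns the acceptance event into \(\{B_{N+M}(S_{N+M})\subseteq[N]\}\). On that event, Assumption~\ref{ass:boundary-projectivity} gives \(B_N(S_N)=B_{N+M}(S_{N+M})\), so \(K_N=K_{N+M}\). The event inside the probability is therefore exactly
\[
E_{[N]}:=\{B_{N+M}(S_{N+M})\subseteq[N],\ K_{N+M}=k\}.
\]
The identity is deterministic and holds pointwise, including the case \(M=0\), which is trivial.

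For the counting step, define \(E_I:=\{B_{N+M}(S_{N+M})\subseteq I,\ K_{N+M}=k\}\) for each \(I\subseteq[N+M]\) with \(|I|=N\). The sample is exchangeable and \(B\) is permutation equivariant, so \(\Pp(E_I)\) does not depend on \(I\). On \(\{K_{N+M}=k\}\), the boundary is a fixed \(k\)-set, and it is contained in exactly \(\binom{N+M-k}{N-k}\) of the \(N\)-subsets \(I\). Summing indicators over all \(I\) and taking expectations gives
\[
\binom{N+M}{N}\Pp(E_{[N]})=\binom{N+M-k}{N-k}\,p_{N+M}(k).
\]
The elementary identity \(\binom{N+M-k}{N-k}/\binom{N+M}{N}=\binom{N}{k}/\binom{N+M}{k}\) then yields \eqref{eq:unconditional-moment}. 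Dividing by \(p_N(k)>0\) gives \eqref{eq:conditional-moment}.

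The main point needing care is the projectivity step. The equivalence assumption only says the boundary sits inside \([N]\), so projectivity is what identifies \(K_N\) with \(K_{N+M}\) on that event, after the re-indexing convention. The argument also needs measurability of \(\{B_{N+M}\subseteq I\}\), which is covered by the standing conventions. Ties, or a probability-one regularity class, are handled as in Section~\ref{sec:setup}, because only finitely many subcollections of \(S_{N+M}\) enter the argument.
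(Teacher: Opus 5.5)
Your proof is correct and follows essentially the same route as the paper: write $(1-V_N)^M$ as the probability that $M$ fresh points are all accepted, convert that event via boundary equivalence and projectivity into $\{B_{N+M}\subseteq I,\ K_{N+M}=k\}$, and count the $N$-subsets that contain a fixed $k$-point boundary. The only difference is ordering: you apply the deterministic assumptions at the fixed split $I=[N]$ and then symmetrize by double counting, which makes explicit the use of permutation equivariance of $B$. The paper instead invokes exchangeability first to pass to a uniformly random design subset, and then applies the assumptions.
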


\begin{proof}
Let \(Y_1,\ldots,Y_M\) be fresh i.i.d. samples from \(P\), independent of
\(S_N\).  Conditional on \(S_N\),
\[
  \Pp\{Y_1,\ldots,Y_M\in\Gamma(S_N)\mid S_N\}=(1-V_N)^M.
\]
Hence
\begin{equation}\label{eq:moment-probability}
  \Ee[(1-V_N)^M\one_{\{K_N=k\}}]
  =
  \Pp\{Y_1,\ldots,Y_M\in\Gamma(S_N),\ K_N(S_N)=k\}.
\end{equation}

Now form the augmented sample
\[
  T_{N+M}=(Z_1,\ldots,Z_N,Y_1,\ldots,Y_M).
\]
By exchangeability of the augmented sample, the probability in
\eqref{eq:moment-probability} is the same as the following experiment: draw
\(T_{N+M}\), choose uniformly an \(N\)-point design subset
\(I\subseteq[N+M]\), let \(J=[N+M]\setminus I\), and ask for
\[
  T_J\subseteq\Gamma(T_I)
  \quad\text{and}\quad
  K_N(T_I)=k .
\]
By boundary equivalence,
\[
  T_J\subseteq\Gamma(T_I)
  \quad\Longleftrightarrow\quad
  B_{N+M}(T_{N+M})\subseteq I.
\]
On this event, projectivity gives
\[
  B_N(T_I)=B_{N+M}(T_{N+M}),
\]
after re-indexing, and therefore
\(K_N(T_I)=K_{N+M}(T_{N+M})\).  The event is thus equivalent to
\[
  B_{N+M}(T_{N+M})\subseteq I
  \quad\text{and}\quad
  K_{N+M}(T_{N+M})=k.
\]
Conditional on \(T_{N+M}\) and \(K_{N+M}=k\), the boundary is a fixed
\(k\)-element subset of \([N+M]\).  A uniform \(N\)-element subset \(I\)
contains it with probability
\[
  \frac{\binom{N+M-k}{N-k}}{\binom{N+M}{N}}
  =
  \frac{\binom{N}{k}}{\binom{N+M}{k}}.
\]
Taking expectations gives \eqref{eq:unconditional-moment}.  Dividing by
\(p_N(k)\) gives \eqref{eq:conditional-moment}.
\end{proof}

\begin{corollary}[Conditional law from the profile]\label{cor:law-from-profile}
If \(p_N(k)>0\), then the conditional law of \(V_N\) given \(K_N=k\) is the
unique probability measure \(\mu_{N,k}\) on \([0,1]\) satisfying
\begin{equation}\label{eq:profile-moments}
  \int_0^1(1-v)^M\,d\mu_{N,k}(v)
  =
  \frac{p_{N+M}(k)}{p_N(k)}
  \frac{\binom{N}{k}}{\binom{N+M}{k}},
  \qquad M=0,1,2,\ldots.
\end{equation}
\end{corollary}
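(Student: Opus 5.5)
The corollary follows from Theorem~\ref{thm:exact-profile} together with the uniqueness of the Hausdorff moment problem on a compact interval. The plan has two parts. First, we show that the conditional law actually satisfies \eqref{eq:profile-moments}. Second, we show that no other probability measure on \([0,1]\) can share those moments.

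Existence comes first. Since \(V_N\in[0,1]\) is a measurable function of \(S_N\), and \(\{K_N=k\}\) is a measurable event with \(p_N(k)>0\), the conditional law
\[
  \mu_{N,k}(A):=\Pp\{V_N\in A\mid K_N=k\}
\]
is a well-defined probability measure on \([0,1]\). For every \(M\ge0\), the change-of-variables formula gives
\[
  \int_0^1(1-v)^M\,d\mu_{N,k}(v)=\Ee[(1-V_N)^M\mid K_N=k].
\]
By \eqref{eq:conditional-moment}, the right-hand side equals the profile expression. This is exactly \eqref{eq:profile-moments}. The case \(M=0\) reduces to \(1=1\), which is consistent with \(\mu_{N,k}\) being a probability measure.

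Uniqueness comes next. Suppose \(\nu\) is another probability measure on \([0,1]\) satisfying \eqref{eq:profile-moments}. Push both \(\mu_{N,k}\) and \(\nu\) forward under \(v\mapsto 1-v\), which maps \([0,1]\) onto itself. The two image measures then have identical ordinary moments \(\int u^M\) for all \(M\ge0\). By linearity they integrate every polynomial to the same value. The Weierstrass approximation theorem says polynomials are uniformly dense in \(C([0,1])\), so the two image measures integrate every continuous function identically. The Riesz representation theorem, or equivalently the fact that finite Borel measures on a compact metric space are determined by their integrals against continuous functions, then forces the image measures to coincide. Hence \(\nu=\mu_{N,k}\).

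The only real point of care is the uniqueness step. It relies on the support being the compact interval \([0,1]\), which holds because \(V_N\) is a probability. It also relies on having moments for every integer \(M\ge0\). Theorem~\ref{thm:exact-profile} supplies these for all \(M\), because the scheme is assumed proper projective at every sample size \(N+M\). No further obstacle is expected.
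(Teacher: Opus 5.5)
Your proposal is correct and follows the paper's own proof. The moments come directly from Theorem~\ref{thm:exact-profile}, and uniqueness is Hausdorff moment determinacy on $[0,1]$, which you simply unpack through the reflection $v\mapsto 1-v$, Weierstrass density of polynomials, and the Riesz representation theorem.
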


\begin{proof}
The moments are those of Theorem~\ref{thm:exact-profile}.  By the Hausdorff moment theorem, probability measures on the compact interval \([0,1]\) are determined by their integer moments.
\end{proof}

\begin{remark}[Admissible profiles]
A true boundary scheme automatically produces a valid Hausdorff moment sequence
in \eqref{eq:profile-moments}.  A proposed model or estimate of a complexity
profile must satisfy the same positivity and complete-monotonicity constraints
before it can be used as an exact law.
\end{remark}

\subsection{When the Beta Law Is Valid}\label{subsec:beta-law}

The beta law follows when the observed value is \(K_N=k\) and the complexity profile is stable at that value.

\begin{corollary}[Profile-stable beta law]\label{cor:profile-stable-beta}
Assume the conditions of Theorem~\ref{thm:exact-profile}.  Fix
\(k\in\{0,\ldots,N\}\) with \(p_N(k)>0\).  If
\begin{equation}\label{eq:profile-stability}
  p_{N+M}(k)=p_N(k),\qquad M=0,1,2,\ldots,
\end{equation}
then, for \(k\ge1\),
\[
  V_N\mid\{K_N=k\}\sim \Beta(k,N-k+1).
\]
Equivalently,
\begin{equation}\label{eq:beta-tail}
  \Pp\{V_N>\eps\mid K_N=k\}
  =
  \sum_{i=0}^{k-1}\binom{N}{i}\eps^i(1-\eps)^{N-i},
  \qquad 0\le\eps\le1.
\end{equation}
For \(k=0\), \(V_N=0\) almost surely on \(\{K_N=0\}\).
\end{corollary}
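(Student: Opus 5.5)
We prove Corollary~\ref{cor:profile-stable-beta} by combining Theorem~\ref{thm:exact-profile} with Corollary~\ref{cor:law-from-profile}. The plan is to show that, under \eqref{eq:profile-stability}, the moment sequence \eqref{eq:profile-moments} coincides with that of the Beta law. By uniqueness of the Hausdorff moment problem, the conditional law must then be that Beta law.

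First, substituting \eqref{eq:profile-stability} into \eqref{eq:conditional-moment} gives
\[
  \Ee\!\left[(1-V_N)^M\mid K_N=k\right]=\frac{\binom{N}{k}}{\binom{N+M}{k}},\qquad M=0,1,2,\ldots .
\]
Second, for $k\ge1$ let $V\sim\Beta(k,N-k+1)$, so that $1-V\sim\Beta(N-k+1,k)$. We then compute its moments with the Beta-function identity:
\[
  \Ee[(1-V)^M]=\frac{B(N-k+1+M,k)}{B(N-k+1,k)}
  =\frac{(N+M-k)!\,N!}{(N-k)!\,(N+M)!}.
\]
Expanding both binomial coefficients shows that this equals $\binom{N}{k}/\binom{N+M}{k}$. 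This small factorial check is the only computation needed and poses no real obstacle. Corollary~\ref{cor:law-from-profile} then identifies the conditional law as $\Beta(k,N-k+1)$.

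The tail formula \eqref{eq:beta-tail} follows from the standard Beta–binomial identity. If $U_{(1)}\le\dots\le U_{(N)}$ are the order statistics of $N$ i.i.d.\ uniforms, then $\Beta(k,N-k+1)$ is the law of $U_{(k)}$. The event $\{U_{(k)}>\eps\}$ occurs exactly when fewer than $k$ of the uniforms fall in $[0,\eps]$, which has probability $\sum_{i=0}^{k-1}\binom{N}{i}\eps^i(1-\eps)^{N-i}$. The endpoints $\eps=0,1$ can be checked directly.

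For $k=0$ the moment formula gives $\Ee[(1-V_N)^M\mid K_N=0]=1$ for all $M$. Taking $M=1$ shows $\Ee[V_N\mid K_N=0]=0$, and since $V_N\ge0$ we conclude $V_N=0$ almost surely on $\{K_N=0\}$. Equivalently, the law is the point mass at $0$, which is again determined by its moments.
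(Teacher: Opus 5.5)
Your proof is correct and follows the paper's argument. Both substitute profile stability into the conditional moment formula, match the resulting moments $\binom{N}{k}/\binom{N+M}{k}$ with those of $1-U$ for $U\sim\Beta(k,N-k+1)$, and conclude by moment determinacy plus the standard beta--binomial tail identity; you additionally write out the factorial check, the order-statistic derivation of the tail, and the $k=0$ case, all of which the paper leaves implicit.
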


\begin{proof}
Under \eqref{eq:profile-stability},
\[
  \Ee[(1-V_N)^M\mid K_N=k]
  =
  \frac{\binom{N}{k}}{\binom{N+M}{k}}.
\]
If \(U\sim\Beta(k,N-k+1)\), then \(1-U\sim\Beta(N-k+1,k)\) and has the same
moments.  Moment determinacy gives the beta distribution.  The tail expression
is the standard beta-binomial identity.
\end{proof}

\begin{corollary}[Fixed boundary size]\label{cor:fixed-size}
Let \(s\ge0\) be an integer and assume the conditions of Theorem~\ref{thm:exact-profile}.  If \(K_n=s\) almost surely for every \(n\ge\max\{1,s\}\), then, for every \(N\ge\max\{1,s\}\),
\[
  V_N\sim\Beta(s,N-s+1)
\]
when \(s\ge1\), and \(V_N=0\) almost surely when \(s=0\).
\end{corollary}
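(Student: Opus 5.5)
The plan is to reduce Corollary~\ref{cor:fixed-size} to Corollary~\ref{cor:profile-stable-beta} with \(k=s\). The hypothesis \(K_n=s\) almost surely for every \(n\ge\max\{1,s\}\) gives \(p_n(s)=1\) for all such \(n\). Fix \(N\ge\max\{1,s\}\). Then \(N+M\ge\max\{1,s\}\) for every \(M\ge0\), so
\[
  p_{N+M}(s)=1=p_N(s),\qquad M=0,1,2,\ldots .
\]
This is exactly the profile-stability condition \eqref{eq:profile-stability} at \(k=s\), and in particular \(p_N(s)>0\). Note also that \(s\le N\), so \(k=s\) lies in the admissible range \(\{0,\ldots,N\}\) required by Theorem~\ref{thm:exact-profile}.

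For \(s\ge1\), Corollary~\ref{cor:profile-stable-beta} then gives \(V_N\mid\{K_N=s\}\sim\Beta(s,N-s+1)\). Since \(\{K_N=s\}\) has probability one, conditioning on it does not change the law of \(V_N\). Formally, for any bounded measurable \(f\) we have \(\Ee[f(V_N)]=\Ee[f(V_N)\mid K_N=s]\), so the unconditional law of \(V_N\) is \(\Beta(s,N-s+1)\).

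For \(s=0\), the same corollary gives \(V_N=0\) almost surely on \(\{K_N=0\}\), which is a full-probability event. Hence \(V_N=0\) almost surely. As a direct check, the moment identity \eqref{eq:unconditional-moment} with \(k=0\) gives \(\Ee[(1-V_N)^M]=1\) for all \(M\). Taking \(M=1\) yields \(\Ee[V_N]=0\), which forces \(V_N=0\) almost surely because \(V_N\ge0\).

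There is no substantive obstacle; the only point needing care is the index range. The condition \(n\ge\max\{1,s\}\) must cover every sample size \(N+M\) used in \eqref{eq:profile-stability}, and it does because \(M\ge0\). We also need \(N\ge1\) for Theorem~\ref{thm:exact-profile} to apply, which the same condition guarantees.
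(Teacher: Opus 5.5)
Your proposal is correct and follows the paper's own proof: you observe that \(p_n(s)=1\) for all \(n\ge\max\{1,s\}\), conclude that the profile is stable at \(s\), and invoke Corollary~\ref{cor:profile-stable-beta}, including its \(k=0\) clause. Your additional remarks on the index range, on passing from the conditional to the unconditional law (since \(\{K_N=s\}\) has probability one), and your direct moment check for \(s=0\) are all valid.
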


\begin{proof}
For every \(n\ge\max\{1,s\}\), the assumption gives \(p_n(s)=1\).  Hence the profile is stable at \(s\), and Corollary~\ref{cor:profile-stable-beta} gives the stated law.  The case \(s=0\) follows from the last statement of that corollary.
\end{proof}

\begin{corollary}[Bounded boundary size]\label{cor:bounded_boundary}
Let \(s\ge0\) be an integer and assume the conditions of Theorem~\ref{thm:exact-profile}.  If \(K_n\le s\) almost surely for every \(n\ge\max\{1,s\}\), then, for every \(N\ge\max\{1,s\}\),
\[
  \Pp\{V_N>\eps\}
  \le
  \sum_{i=0}^{s-1}\binom{N}{i}\eps^i(1-\eps)^{N-i},
  \qquad 0<\eps<1.
\]
For \(s=0\), the sum is interpreted as zero.
\end{corollary}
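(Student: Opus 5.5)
We need to prove that when \(K_n\le s\) almost surely, the unconditional tail of \(V_N\) is dominated by the \(\Beta(s,N-s+1)\) tail. The appendix mentions a "padded" proof, and I would follow that idea: enlarge the boundary to exactly \(s\) points, so that Corollary~\ref{cor:fixed-size} applies, and then use monotonicity. The obstacle is that padding the boundary with arbitrary extra points breaks boundary equivalence. Accepted held-out points that are not in the padded set would still pass the test. So the padding must also shrink the acceptance set. The construction I would try is a lifted, continuous-padding scheme. Augment each observation with an independent auxiliary uniform mark \(U_i\), so that ties are excluded. Define a new scheme \((\tilde\Gamma,\tilde B)\) on \(\tilde z=(z,u)\) as follows. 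The padded boundary \(\tilde B_n\) consists of \(B_n(T_n)\) together with the \(s-K_n\) non-boundary indices that have the largest marks \(u\). A fresh \(\tilde z=(z,u)\) is accepted, \(\tilde z\in\tilde\Gamma(\tilde T_n)\), iff \(z\in\Gamma(T_n)\) and \(u\) is smaller than the \((s-K_n)\)-th largest mark among the accepted non-boundary points. This is essentially a scalar order-statistic calibration run on the accepted residual points.

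The first step is to verify that \((\tilde\Gamma,\tilde B)\) is proper and projective and has \(|\tilde B_n|=s\) exactly for \(n\ge s\). For equivalence, fix a split \(I,J\). If every \(j\in J\) is accepted by \(\tilde\Gamma(\tilde T_I)\), then each \(z_j\in\Gamma(T_I)\), so \(B_n\subseteq I\) and, by projectivity, \(B_{|I|}(T_I)=B_n(T_n)\). Hence \(K\) is the same at both sample sizes. The non-boundary points of the full sample are then those of \(T_I\) plus the \(J\) points, and the mark condition says exactly that no \(J\) point is among the top \(s-K\) marks, i.e.\ \(\tilde B_n\subseteq I\). The converse runs the same steps in reverse. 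Projectivity of \(\tilde B\) follows similarly, since the top-mark selection among retained points is unchanged by deleting lower-marked points. I expect this verification to be the main obstacle. The delicate points are two. First, the padding must be chosen among the accepted non-boundary points, whose set depends on the subsample; this is where projectivity of the original scheme is essential. Second, it is not obvious that \(\Gamma(T_I)\) accepts exactly the non-boundary points of \(T_n\) outside \(I\). For that I would invoke Assumption~\ref{ass:boundary-equivalence} with singleton \(J\) and the projective identity.

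Given this, Corollary~\ref{cor:fixed-size} gives \(\tilde V_N\sim\Beta(s,N-s+1)\) for every \(N\ge\max\{1,s\}\). The case \(s=0\) forces \(K_n=0\), and then \(V_N=0\) directly by the same corollary. Since \(\tilde\Gamma(\tilde T_N)\subseteq\Gamma(T_N)\times[0,1]\), we have \(\tilde V_N\ge V_N\) pointwise. Therefore \(\Pp\{V_N>\eps\}\le\Pp\{\tilde V_N>\eps\}\), and the latter equals the stated beta-binomial sum by \eqref{eq:beta-tail}. This completes the plan, with the measurability of the lifted maps handled by the standing conventions.
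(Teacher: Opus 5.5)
Your plan is the paper's own proof in Appendix~\ref{app:padded-proof}. The shared steps are: auxiliary uniform marks, padding \(B_n\) with the \(s-K_n\) largest-mark non-boundary indices, and thresholding the mark of a new point by the \((s-b_I)\)-th largest retained non-boundary mark. Boundary equivalence and projectivity are then verified through the original identity \(B_{|I|}(T_I)=B_n(T_n)\), and the bound follows from Corollary~\ref{cor:fixed-size} together with \(\widetilde\Gamma\subseteq\Gamma\times[0,1]\).

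Three small fixes are needed:
\begin{enumerate}
\item Take the threshold over all of \(I\setminus B_{|I|}(T_I)\) rather than over ``accepted'' points, since the assumptions say nothing about in-sample acceptance.
\item Drop the singleton-\(J\) detour. When \(B_n(T_n)\subseteq I\), Assumption~\ref{ass:boundary-equivalence} applied to the full split already gives acceptance of every held-out point. The finer ``exactly'' claim you anticipate needing is false in general when \(B_n(T_n)\not\subseteq I\).
\item The paper also sets explicit conventions for the cases \(|I|<s\), \(b_I>s\) and \(r_I=0\); you should do the same.
\end{enumerate}
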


The proof is given in Appendix~\ref{app:padded-proof}.  It augments every observation with an independent auxiliary mark, uses those marks to pad the boundary to cardinality \(s\), and tightens the acceptance set so that the padded scheme remains proper and projective.  The original violation risk is then bounded pointwise by the padded risk, whose fixed-boundary law is \(\Beta(s,N-s+1)\).  This extends the familiar convex scenario bound based on an upper bound for the number of support constraints; see, e.g., \cite{CampiGaratti2008,CalafioreSIOPT2010}.

\section{Verifying the Boundary Assumptions}\label{sec:verification}

This section verifies the boundary assumptions in four representative cases.

\subsection{Scalar Order-Statistic Calibration}\label{subsec:order-statistic}

Let \(\phi:\cZ\to\Rr\) be a measurable score.  Fix \(s\ge1\).  For a finite
design set \(T_I\), define
\[
  \Gamma_s(T_I)=\{z:\phi(z)\le q_s(T_I)\},
\]
where \(q_s(T_I)\) is the \(s\)-th largest score among
\(\{\phi(z_i):i\in I\}\) if \(|I|\ge s\).  If \(|I|<s\), set
\(\Gamma_s(T_I)=\emptyset\).  For a full sample \(T_n\), let \(B_n(T_n)\) be
the indices of the \(s\) largest scores.  Assume scores are distinct, or use a
fixed deterministic tie-breaker.

\begin{proposition}[Order-statistic boundary]\label{prop:order-statistic}
For every \(n\ge s\), \((\Gamma_s,B)\) is a proper projective boundary scheme
and \(K_n=s\).
\end{proposition}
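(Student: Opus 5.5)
The plan is to fix the tie-breaker once and for all as a strict total order on indexed points, so that "the \(s\) largest scores" is well defined for every subcollection and the comparison between any two points does not depend on which other points are present. Write \(z_a\succ z_b\) for this order: larger score, with ties broken by the fixed equivariant rule. Then \(B_n(T_n)\) is the set of the \(s\) largest elements of \([n]\) under \(\succ\), so \(K_n=s\) for every \(n\ge s\). Permutation equivariance is immediate. To keep strict-versus-weak inequalities consistent under ties, I will read \(\phi(z)\le q_s(T_I)\) as "\(z\) is not \(\succ\) the \(s\)-th largest element of \(T_I\)". With distinct scores this coincides with the definition as stated.

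For boundary equivalence (Assumption~\ref{ass:boundary-equivalence}), fix \(T_n\) and a split \(I\cup J=[n]\). First I treat the degenerate case \(|I|<s\). Then \(\Gamma_s(T_I)=\emptyset\), and \(J\) has more than \(n-s\) elements, so \(J\) is nonempty. Hence the left side of \eqref{eq:boundary-equivalence} fails. The right side also fails, since \(|B_n|=s>|I|\).

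Now let \(|I|\ge s\), and let \(q\) denote the \(s\)-th largest element of \(T_I\).

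\emph{Step (\(\Leftarrow\)).} Suppose \(B_n\subseteq I\). Then the top \(s\) elements of \(T_n\) all lie in \(I\), so they are also the top \(s\) elements of \(T_I\), and \(q\) is the \(s\)-th largest element of \(T_n\). Every \(j\in J\) lies outside \(B_n\), so \(z_j\prec q\). Hence \(z_j\in\Gamma_s(T_I)\).

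\emph{Step (\(\Rightarrow\)).} Argue by contraposition. Suppose some \(j\in J\) lies in \(B_n\). I claim \(z_j\succ q\), which makes \(z_j\) a violation. Indeed, \(z_j\) is among the top \(s\) of \(T_n\), so at most \(s-1\) elements of \(T_n\) exceed it. In particular at most \(s-1\) elements of \(T_I\) exceed it. Since \(j\notin I\), \(z_j\) is distinct from every element of \(T_I\), so it sits strictly above the \(s\)-th largest element of \(T_I\).

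For projectivity (Assumption~\ref{ass:boundary-projectivity}), if \(B_n(T_n)\subseteq I\), the same observation as in Step (\(\Leftarrow\)) applies. The top \(s\) elements of \(T_n\) lie in \(I\) and dominate everything else in \(T_I\). So they are exactly the top \(s\) of \(T_I\), i.e.\ \(B_{|I|}(T_I)=B_n(T_n)\) after re-indexing. This also forces \(|I|\ge s\).

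The only delicate point is the tie convention. The argument requires the order used by \(\Gamma_s\) and the order used by \(B_n\) to be the same strict total order, and that order must be intrinsic to the pair of points rather than dependent on the sample. A tie-breaker with this property (e.g.\ one based on an auxiliary continuous mark or a fixed measurable total order on \(\cZ\)) is what I will assume. Everything else is direct order-statistic bookkeeping.
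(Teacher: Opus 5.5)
Your proof is correct and follows the same argument as the paper. Both directions of boundary equivalence and projectivity rest on two facts: retaining the top \(s\) scores makes them the top \(s\) of \(T_I\), and an omitted top-\(s\) point strictly exceeds the \(s\)-th largest retained score. Your explicit handling of the case \(|I|<s\), and your use of the same tie-breaking order inside the acceptance rule \(\Gamma_s\) (without which boundary equivalence can fail under ties), are worthwhile refinements that the paper's proof leaves implicit because it is written for distinct scores.
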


\begin{proof}
Fix \(T_n\) and a split \(I,J\).  If \(B_n(T_n)\subseteq I\), then the top
\(s\) scores in \(T_I\) are the top \(s\) scores in \(T_n\).  Every omitted
non-boundary point has score below \(q_s(T_I)\), so every omitted point is
accepted.

Conversely, suppose \(b\in B_n(T_n)\cap J\).  Since \(b\) is one of the top
\(s\) full-sample scores and is missing from \(I\), the \(s\)-th largest score
in \(I\) is strictly smaller than \(\phi(z_b)\).  Thus \(z_b\notin\Gamma_s(T_I)\).
Boundary equivalence follows.  If \(B_n(T_n)\subseteq I\), the top \(s\) scores
of \(T_I\) and \(T_n\) are the same, which proves projectivity.
\end{proof}

With \(s=r+1\), this is scalar split conformal prediction after discarding
\(r\) upper-tail scores.  The boundary consists of the \(r\) discarded scores
and the threshold-defining score, and Corollary~\ref{cor:fixed-size} gives
\[
  V_N\sim\Beta(r+1,N-r).
\]

\subsection{Support-Reconstructive Scenario Programs}\label{subsec:scenario}

Consider the scenario program
\begin{equation}\label{eq:scenario-program}
  x_I\in\argmin_{x\in X} f(x)
  \quad\text{subject to}\quad g(x,z_i)\le0,\qquad i\in I.
\end{equation}
Assume feasibility for all finite samples under consideration, and that a deterministic
selection rule, for example uniqueness, lexicographic ordering, or a fixed
regularization, is used consistently across all subproblems.  The associated
acceptance set is
$
  \Gamma(T_I)=\{z:g(x_I,z)\le0\}$.
The boundary must be defined with respect to the certified object that is to be
reconstructed.  If the optimizer itself is the certified object, the selected
optimizer must be unique in the above deterministic sense.  If two optimizers
can induce the same acceptance set, then the certified object should instead be
the acceptance set.  Formally, let \(a_I\) denote the certified object and assume
that equality of certified objects is equivalent to equality of the acceptance
sets used for certification.  Define the decision support set
\begin{equation}\label{eq:scenario-support}
  B_n(T_n)=\{i\in[n]:a_{[n]\setminus\{i\}}\ne a_{[n]}\}.
\end{equation}
Assume the support set reconstructs the certified object:
\begin{equation}\label{eq:scenario-reconstruction}
  a_{B_n(T_n)}=a_{[n]}.
\end{equation}
Also assume confirmed-addition stability: if \(I\subseteq K\) and the object
\(a_I\) accepts every added sample in \(K\setminus I\), then \(a_K=a_I\).  For a
convex scenario problem with a unique selected optimizer this is the usual
monotonicity argument: once the old optimizer remains feasible after adding
constraints, no point in the smaller feasible set can improve on it.

\begin{proposition}[Scenario boundary]\label{prop:scenario-boundary}
On any deterministic class of samples for which confirmed-addition stability and
reconstruction \eqref{eq:scenario-reconstruction} hold, the decision support map
\eqref{eq:scenario-support} is a proper projective boundary for \(\Gamma\).
\end{proposition}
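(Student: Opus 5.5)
The plan is to check Assumptions~\ref{ass:boundary-equivalence} and~\ref{ass:boundary-projectivity} directly from the support definition \eqref{eq:scenario-support}, using only two facts. The first is confirmed-addition stability. The second is reconstruction \eqref{eq:scenario-reconstruction}, which will be applied both to \(T_n\) and to the subsample \(T_I\); this is why the hypothesis is stated on a class closed under the subcollections considered. Throughout, we work with the certified object \(a_I\) and use that \(a_I=a_K\) if and only if the induced acceptance sets agree.

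For the direction "\(\Leftarrow\)" of equivalence, assume \(B_n(T_n)\subseteq I\). Let \(B=B_n(T_n)\), so that \(B\subseteq I\subseteq[n]\). We will show \(a_I=a_{[n]}\). Reconstruction gives \(a_B=a_{[n]}\), and \(a_{[n]}\) accepts every \(z_j\) with \(j\in[n]\); this holds because each sample is a constraint of the full program, i.e.\ \(g(x_{[n]},z_j)\le0\). Hence \(a_B\) accepts every sample in \(I\setminus B\), and confirmed-addition stability with \(B\subseteq I\) yields \(a_I=a_B=a_{[n]}\). Consequently \(\Gamma(T_I)=\Gamma(T_n)\) contains every \(z_j\) with \(j\in J\).

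For "\(\Rightarrow\)", suppose every \(z_j\), \(j\in J\), lies in \(\Gamma(T_I)\). Applying stability to \(I\subseteq[n]\) gives \(a_{[n]}=a_I\). Take any \(i\in J\). Then \(I\subseteq[n]\setminus\{i\}\), and \(a_I\) accepts all samples of \([n]\setminus\{i\}\) outside \(I\), so stability gives \(a_{[n]\setminus\{i\}}=a_I=a_{[n]}\). Thus \(i\notin B_n(T_n)\), and therefore \(B_n(T_n)\subseteq I\).

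For projectivity, assume \(B_n(T_n)\subseteq I\); we know \(a_I=a_{[n]}\) from the step above. Fix \(i\in I\) and compare \(a_{I\setminus\{i\}}\) with \(a_{[n]\setminus\{i\}}\). If \(i\notin B_n(T_n)\), then \(B_n(T_n)\subseteq I\setminus\{i\}\). Repeating the "\(\Leftarrow\)" argument with \(I\setminus\{i\}\) in place of \(I\) gives \(a_{I\setminus\{i\}}=a_{[n]}=a_I\), so \(i\notin B_{|I|}(T_I)\). If instead \(i\in B_n(T_n)\), we must show \(a_{I\setminus\{i\}}\ne a_I\). Suppose for contradiction that \(a_{I\setminus\{i\}}=a_I=a_{[n]}\). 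This object accepts every sample in \([n]\setminus\{i\}\), so stability applied to \(I\setminus\{i\}\subseteq[n]\setminus\{i\}\) gives \(a_{[n]\setminus\{i\}}=a_{I\setminus\{i\}}=a_{[n]}\). That contradicts \(i\in B_n(T_n)\). Hence \(B_{|I|}(T_I)=B_n(T_n)\) after re-indexing.

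The main obstacle I expect is bookkeeping rather than mathematics. First, one must make sure that "accepts" in the stability assumption is exactly membership in \(\Gamma\), and that the full-sample object accepts its own samples; this is feasibility of \(x_{[n]}\) for its constraints. Second, one must make sure that the deterministic selection rule is applied consistently, so that equal acceptance sets really force equal certified objects across different index sets. I would state these two points explicitly before running the arguments above.
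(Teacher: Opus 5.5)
Your proof is correct and follows the paper's argument essentially step for step. The forward direction of boundary equivalence applies confirmed-addition stability from $I$ to $[n]$ and from $I$ to $[n]\setminus\{j\}$, then uses the support definition. The reverse direction combines reconstruction $a_B=a_{[n]}$ with one application of stability. Projectivity separates non-boundary indices, handled by rerunning the reverse direction on $I\setminus\{\ell\}$, from boundary indices, handled by the same contradiction via stability from $I\setminus\{j\}$ to $[n]\setminus\{j\}$.

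One minor remark: reconstruction is only ever used for the full sample $T_n$, not for $T_I$. The restricted support set is read off directly from \eqref{eq:scenario-support}.
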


\begin{proof}
Fix \(T_n\) and write \(B=B_n(T_n)\).  Suppose first that all samples in
\(J=[n]\setminus I\) are accepted by \(\Gamma(T_I)\).  Confirmed-addition
stability gives \(a_{[n]}=a_I\).  For any \(j\in J\), adding the points in
\(J\setminus\{j\}\) to \(I\) also preserves the certified object, so
\(a_{[n]\setminus\{j\}}=a_{[n]}\).  By \eqref{eq:scenario-support},
\(j\notin B\).  Hence \(B\subseteq I\).

Conversely, suppose \(B\subseteq I\).  By reconstruction, \(a_B=a_{[n]}\).
Every sample outside \(B\) is feasible for \(a_{[n]}\), hence accepted by
\(\Gamma(T_B)\).  Repeated use of confirmed-addition stability gives
\(a_I=a_{[n]}\), and all omitted points are accepted because the full-sample
scenario solution is feasible for every sampled constraint.  This proves
boundary equivalence.

For projectivity, let \(B\subseteq I\).  The preceding paragraph gives
\(a_I=a_{[n]}\).  If \(\ell\in I\setminus B\), then
\(B\subseteq I\setminus\{\ell\}\).  By reconstruction, \(a_B=a_{[n]}\).
Every sample in \((I\setminus\{\ell\})\setminus B\) is feasible for
\(a_{[n]}\), hence accepted by the object reconstructed from \(B\).  Confirmed-addition
stability, applied from \(B\) to \(I\setminus\{\ell\}\), gives
\(a_{I\setminus\{\ell\}}=a_B=a_{[n]}=a_I\).  Thus no index in
\(I\setminus B\) is support in the restricted problem.  Conversely, if
\(j\in B\) and \(a_{I\setminus\{j\}}=a_I\), then every sample in
\([n]\setminus I\) is accepted by \(a_I=a_{[n]}\).  Confirmed-addition
stability would then give \(a_{[n]\setminus\{j\}}=a_{[n]}\), contradicting
\(j\in B\).  Thus the restricted support set is exactly \(B\).
\end{proof}

When the support size is deterministic, Proposition~\ref{prop:scenario-boundary}
and Corollary~\ref{cor:fixed-size} recover the exact scenario law
\eqref{eq:intro-scenario-tail}.  When the support size is random, the
conditional law is governed by the profile ratio in Theorem~\ref{thm:exact-profile},
as in the risk-complexity theory.

\subsection{A Coordinatewise Scenario Envelope with Random Support}\label{subsec:coordinatewise-envelope}

The following elementary scenario problem is useful because its boundary size is
random but the profile is explicit.  Let \(Z=(Z^{(1)},Z^{(2)})\in[0,1]^2\), and
consider
\[
  \min_{x\in\mathbb R^2} x_1+x_2
  \quad\text{subject to}\quad
  Z_i^{(1)}\le x_1,
  \quad Z_i^{(2)}\le x_2,
  \qquad i\in I.
\]
Equivalently, the scalar constraint is
\(g(x,z)=\max\{z^{(1)}-x_1,z^{(2)}-x_2\}\le0\).  For \(I=\emptyset\), set \(\Gamma(T_I)=\emptyset\).  For nonempty \(I\), the optimizer is the
coordinatewise envelope
\[
  x_I=\bigl(\max_{i\in I} Z_i^{(1)},\max_{i\in I} Z_i^{(2)}\bigr),
  \qquad
  \Gamma(T_I)=\{z:z^{(1)}\le x_{I,1},\ z^{(2)}\le x_{I,2}\}.
\]
With continuous marginals, the boundary is the union of the two coordinatewise
maximizers.  Thus \(K_n\in\{1,2\}\): \(K_n=1\) when the same sample maximizes both
coordinates, and \(K_n=2\) otherwise.  The map is proper and projective by the
same argument as Proposition~\ref{prop:scenario-boundary}.

If the two coordinates are independent and continuous, the ranks of the two
coordinatewise maxima are independent and uniform over \([n]\).  Hence
\begin{equation}\label{eq:coordinate-envelope-profile}
  p_n(1)=\frac1n,
  \qquad
  p_n(2)=1-\frac1n.
\end{equation}
For \(N\ge2\), Theorem~\ref{thm:exact-profile} gives, for both \(k=1\) and
\(k=2\),
\begin{equation}\label{eq:coordinate-envelope-moments}
  \mathbb E[(1-V_N)^M\mid K_N=k]
  =\frac{N^2}{(N+M)^2},
  \qquad M=0,1,2,\ldots.
\end{equation}
The moment sequence in \eqref{eq:coordinate-envelope-moments} corresponds to a product of two independent beta variables, rather than to either fixed boundary dimension \(1\) or \(2\):
\[
 \{ 1-V_N\mid K_N=k\}
  \stackrel{d}{=}
  U_1U_2,
  \qquad
  U_1,U_2\overset{\mathrm{i.i.d.}}{\sim}\mathrm{Beta}(N,1),
  \qquad k=1,2.
\]
Equivalently,
\[
  \{-\log(1-V_N)\mid K_N=k\}
  \sim \mathrm{Gamma}(2,\mathrm{rate}\ N).
\]
Thus \(\{V_N\mid K_N=k\}\) has density
\[
  f_{V_N\mid K_N=k}(v)
  =
  N^2[-\log(1-v)](1-v)^{N-1},
  \qquad 0<v<1,
\]
and CDF
\[
  F_{V_N\mid K_N=k}(v)
  =
  1-(1-v)^N\{1-N\log(1-v)\}.
\]
The distribution is the same for \(k=1\) and \(k=2\), because the event that the
two coordinatewise maximizers coincide depends only on the two argmax indices,
which are independent of the two coordinatewise maximum values.

\subsection{Pareto-Frontier Calibration with Vector Scores}\label{subsec:pareto}

Now let scores be vectors.  Write \(u\preceq v\) if
\(u_m\le v_m\) for every coordinate \(m\).  For a finite design set
\(T_I\subset[0,1]^d\), define the lower-orthant acceptance set
\[
  \Gamma(T_I)=\{z\in[0,1]^d:\exists i\in I\text{ such that }z\preceq z_i\}.
\]
A full-sample point \(z_i\) is maximal if no other sample \(z_j\) satisfies
\(z_i\preceq z_j\) with \(z_j\ne z_i\).  Let \(B_n(T_n)\) be the set of
maximal indices.  Assume no duplicate points, which holds almost surely under a
continuous distribution.

\begin{proposition}[Pareto-frontier boundary]\label{prop:pareto}
The lower-orthant rule with maximal-index boundary is a proper projective
boundary scheme.
\end{proposition}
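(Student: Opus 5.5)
The plan is to verify the two deterministic conditions directly from the structure of the dominance order $\preceq$ on a finite set of distinct points. Everything rests on one elementary fact: in a finite partially ordered set every element lies below some maximal element. Concretely, for every $i\in[n]$ there is a maximal index $b\in B_n(T_n)$ with $z_i\preceq z_b$. To see this, follow any strictly increasing chain $z_i\prec z_{j_1}\prec z_{j_2}\prec\cdots$ in $T_n$; it must terminate because there are no duplicates and the set is finite. As a consequence, $\Gamma(T_{B_n(T_n)})=\Gamma(T_n)$ and the frontier reconstructs the acceptance set. The same fact applies to any subcollection $T_I$.

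For boundary equivalence, I would fix $I$ and $J=[n]\setminus I$.

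\emph{($\Leftarrow$)} Suppose $B_n(T_n)\subseteq I$ and take $j\in J$. By the fact above there is a maximal $b$ with $z_j\preceq z_b$. Since $b\in I$, this gives $z_j\in\Gamma(T_I)$.

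\emph{($\Rightarrow$)} Suppose every $z_j$ with $j\in J$ is accepted, and let $b\in B_n(T_n)$. If $b\in J$, then $z_b\preceq z_i$ for some $i\in I$. Since $i\neq b$, the no-duplicates assumption gives $z_i\ne z_b$. This contradicts the maximality of $z_b$ in $T_n$, so $b\in I$.

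For projectivity, I assume $B:=B_n(T_n)\subseteq I$ and show that the maximal indices of $T_I$ are exactly $B$, after re-indexing.

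\emph{($B$ is contained in the frontier of $T_I$.)} Each $b\in B$ is maximal in $T_n$, and maximality is inherited by the subcollection $T_I$ because it has fewer competitors.

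\emph{(Nothing else is on the frontier of $T_I$.)} Let $i\in I\setminus B$. Then $z_i$ is not maximal in $T_n$. By the chain argument, $z_i\preceq z_b$ with $z_b\ne z_i$ for some $b\in B$. Since $B\subseteq I$, this dominating point is present in $T_I$, so $z_i$ is not maximal there either. Hence $B_{|I|}(T_I)=B$.

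The only delicate point, and the main obstacle, is the use of distinctness. Without it, coincident points would make ``maximal'' ambiguous: two equal copies would each fail the strict-domination test or both pass it. In that case the left-out copy would be accepted while belonging to the boundary, which breaks equivalence. I would therefore state explicitly that the verification is carried out on the almost-sure class of samples with pairwise distinct points. This is permitted by the standing conventions of Section~\ref{sec:setup}, since the identities must hold simultaneously for all subcollections, and subcollections of distinct points are again distinct. With a continuous law this class has probability one. Measurability of $B_n$ and permutation equivariance are immediate because the maximality test is a finite Boolean combination of coordinate comparisons.
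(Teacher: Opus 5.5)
Your proof is correct and follows the paper's argument essentially step for step. You use the same contradiction with maximality for the forward direction, and the same fact that every element of a finite poset lies below a maximal one for both the converse and projectivity. Your explicit check that maximal points of $T_n$ remain maximal in $T_I$, and your remark on exactly where distinctness is needed, make precise two points that the paper's proof leaves implicit.
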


\begin{proof}
Fix \(T_n\) and a split \(I,J\).  If all omitted points are accepted by
\(\Gamma(T_I)\) and a maximal point \(b\) were omitted, then \(z_b\preceq z_i\)
for some \(i\in I\), contradicting maximality.  Hence all maximal points are in
\(I\).  Conversely, if all maximal points are in \(I\), every finite partially
ordered set element is dominated by a maximal element, so every omitted point is
accepted.  This proves boundary equivalence.  Projectivity follows because,
after all maximal points are retained, any retained non-maximal point is still
dominated by one of them; no new maximal point can appear.
\end{proof}

For conformal prediction, a calibration example \((X_i,Y_i)\) can be mapped to
a vector score \(R_i=\phi(X_i,Y_i)\).  A candidate label \(y\) for a new
covariate \(x\) is accepted when \(\phi(x,y)\in\Gamma(T_I)\).  Proposition
\ref{prop:pareto} says that the nondominated calibration scores are exactly the
samples needed to reconstruct the acceptance set.

For \(d=2\) and i.i.d. uniform scores on \([0,1]^2\), the profile is explicit.
Sort the sample by the first coordinate.  The ranks of the second coordinates
form a uniform random permutation, and Pareto maxima are the right-to-left
records of that permutation.  Therefore
\begin{equation}\label{eq:stirling-profile}
  p_n(k)=\frac{c(n,k)}{n!},
\end{equation}
where \(c(n,k)\) is the unsigned Stirling number of the first kind.  Equivalently,
\(c(1,1)=1\) and
\[
  c(n+1,k)=n\,c(n,k)+c(n,k-1).
\]
Combining this profile with Theorem~\ref{thm:exact-profile} gives
\begin{equation}\label{eq:pareto-moments}
  \Ee[(1-V_N)^M\mid K_N=k]
  =
  \frac{c(N+M,k)/(N+M)!}{c(N,k)/N!}
  \frac{\binom{N}{k}}{\binom{N+M}{k}}.
\end{equation}
In particular,
\begin{equation}\label{eq:first-moment}
  \Ee[V_N\mid K_N=k]
  =
  1-
  \frac{p_{N+1}(k)}{p_N(k)}
  \frac{N+1-k}{N+1}.
\end{equation}
The beta mean \(k/(N+1)\) is recovered only when the profile ratio
\(p_{N+1}(k)/p_N(k)\) equals one.

\section{PAC Certificates from the Exact Law}\label{sec:certificates}

The exact law gives a conditional PAC certificate by inverting the conditional
distribution.

\begin{definition}[Profile-based conditional quantile]
Assume \(p_N(k)>0\), and let \(\mu_{N,k}\) be the law in
Corollary~\ref{cor:law-from-profile}.  For \(\beta\in(0,1)\), define
\[
  q_{N,k}(\beta)
  :=
  \inf\{q\in[0,1]:\mu_{N,k}([0,q])\ge1-\beta\}.
\]
\end{definition}

\begin{corollary}[Sharp conditional PAC certificate]\label{cor:conditional-pac}
Under the assumptions of Theorem~\ref{thm:exact-profile},
\[
  \Pp\{V_N\le q_{N,k}(\beta)\mid K_N=k\}\ge1-\beta .
\]
If \(\mu_{N,k}\) has no atom at \(q_{N,k}(\beta)\), equality holds.
\end{corollary}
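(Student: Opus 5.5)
The plan is to reduce the certificate to an elementary property of the generalized quantile of a probability measure on \([0,1]\), using Corollary~\ref{cor:law-from-profile} to identify \(\mu_{N,k}\) with the conditional law of \(V_N\) given \(\{K_N=k\}\). Since \(p_N(k)>0\), the event \(\{K_N=k\}\) has positive probability. Conditioning is therefore elementary, and \(A\mapsto \Pp\{V_N\in A\mid K_N=k\}\) is a Borel probability measure on \([0,1]\). The measure \(\mu_{N,k}\) is determined by the moments \eqref{eq:profile-moments}, which are exactly the conditional moments of \(1-V_N\) computed in Theorem~\ref{thm:exact-profile}. By the uniqueness part of that corollary, \(\Pp\{V_N\in A\mid K_N=k\}=\mu_{N,k}(A)\) for every Borel \(A\subseteq[0,1]\). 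In particular,
\[
\Pp\{V_N\le q_{N,k}(\beta)\mid K_N=k\}=F(q_{N,k}(\beta)),
\]
where \(F(q)=\mu_{N,k}([0,q])\) is the distribution function of \(\mu_{N,k}\).

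Next I show that \(F(q_{N,k}(\beta))\ge 1-\beta\). The set \(\{q\in[0,1]:F(q)\ge 1-\beta\}\) is nonempty because \(F(1)=1\ge 1-\beta\), so \(q^*:=q_{N,k}(\beta)\in[0,1]\) is well defined. Take a sequence \(q_m\downarrow q^*\) with \(F(q_m)\ge 1-\beta\). This is possible even if \(q^*\) itself belongs to the set, by taking \(q_m=q^*\). Since \(F\) is nondecreasing and right-continuous by continuity from above of \(\mu_{N,k}\), we get \(F(q^*)=\lim_m F(q_m)\ge 1-\beta\). Together with the identification above, this proves the inequality.

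For the equality clause, assume \(\mu_{N,k}(\{q^*\})=0\). Then \(F\) is continuous at \(q^*\). If \(q^*>0\), then for every \(q<q^*\) we have \(F(q)<1-\beta\) by the definition of the infimum. Letting \(q\uparrow q^*\) and using left-continuity at \(q^*\), which is exactly the no-atom hypothesis, gives \(F(q^*)\le 1-\beta\), hence equality. The main delicate point is the boundary case \(q^*=0\), where no \(q<q^*\) exists in \([0,1]\). In that case the no-atom assumption gives \(F(0)=\mu_{N,k}(\{0\})=0\). But we already showed \(F(0)\ge 1-\beta>0\), a contradiction, so \(q^*=0\) cannot occur under the no-atom hypothesis. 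The equality therefore holds in all cases covered by the statement.

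No further input from the boundary scheme is needed. All the structural content enters through Theorem~\ref{thm:exact-profile} and the moment-determinacy step in Corollary~\ref{cor:law-from-profile}, and the remainder is the standard quantile-inversion argument.
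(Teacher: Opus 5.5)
Your proof is correct and takes the same route the paper intends. The paper gives no separate proof: it presents the corollary as the inversion of the conditional law $\mu_{N,k}$ identified in Corollary~\ref{cor:law-from-profile}. Your right-continuity argument, together with your observation that $q_{N,k}(\beta)=0$ cannot occur under the no-atom hypothesis, supplies the standard quantile details the paper leaves implicit.
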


When the profile is not known exactly, one may work with a certified family of
profiles.

\begin{definition}[Robust profile class]
Let \(\cP\) be a family of admissible profiles.  If \(q_{N,k}^{(p)}(\beta)\)
is the quantile produced by profile \(p\in\cP\), define
\[
  q_{N,k}^{\rm rob}(\beta):=
  \sup_{p\in\cP} q_{N,k}^{(p)}(\beta).
\]
\end{definition}

\begin{corollary}[Robust conditional certificate]
If the true complexity profile belongs to \(\cP\), then
\[
  \Pp\{V_N\le q_{N,k}^{\rm rob}(\beta)\mid K_N=k\}\ge1-\beta .
\]
\end{corollary}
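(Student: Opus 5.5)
The plan is to prove the robust certificate by monotonicity of the supremum, reducing it to the sharp conditional PAC certificate of Corollary~\ref{cor:conditional-pac}. Write \(p^\ast\) for the true complexity profile of the scheme. By assumption \(p^\ast\in\cP\). Corollary~\ref{cor:law-from-profile} identifies the conditional law of \(V_N\) given \(K_N=k\) with the measure \(\mu_{N,k}\) whose moments are built from \(p^\ast\). The quantile \(q_{N,k}^{(p^\ast)}(\beta)\) computed from \(p^\ast\) is therefore exactly the quantity \(q_{N,k}(\beta)\) in the definition of the profile-based conditional quantile.

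The first step is to record why Corollary~\ref{cor:conditional-pac} holds. Since \(\mu_{N,k}\) is a probability measure on \([0,1]\), its distribution function \(q\mapsto\mu_{N,k}([0,q])\) is right-continuous. The set \(\{q:\mu_{N,k}([0,q])\ge1-\beta\}\) is nonempty because it contains \(q=1\). Its infimum belongs to the set by right-continuity. Hence
\[
  \Pp\{V_N\le q_{N,k}(\beta)\mid K_N=k\}=\mu_{N,k}([0,q_{N,k}(\beta)])\ge1-\beta .
\]

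The second step uses the definition of the robust quantile. Because \(p^\ast\in\cP\), we have \(q_{N,k}^{\rm rob}(\beta)\ge q_{N,k}^{(p^\ast)}(\beta)=q_{N,k}(\beta)\). We also have \(q_{N,k}^{\rm rob}(\beta)\le1\), since every quantile lies in \([0,1]\). The event \(\{V_N\le q_{N,k}(\beta)\}\) is contained in \(\{V_N\le q_{N,k}^{\rm rob}(\beta)\}\). Monotonicity of conditional probability then gives the claim.

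There is no real obstacle here. The only care needed is a bookkeeping point. The conditioning event \(\{K_N=k\}\) must have positive probability under the true profile, that is, \(p^\ast_N(k)>0\), so that the conditional law is well defined. Each member \(p\in\cP\) must also be admissible in the sense of the remark on admissible profiles, meaning its moment ratios form a valid Hausdorff sequence, so that \(q^{(p)}_{N,k}(\beta)\) is defined. Profiles with \(p_N(k)=0\) can be excluded from the supremum, or assigned the quantile \(1\). Either convention leaves the inequality intact.
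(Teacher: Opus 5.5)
Your proposal is correct and follows essentially the same route as the paper, which leaves the proof implicit: because the true profile lies in $\cP$, the robust quantile dominates the true-profile quantile $q_{N,k}(\beta)$, so the sharp conditional certificate (which you correctly re-derive from right-continuity of the conditional distribution function) yields the bound by event inclusion. Your bookkeeping remarks, namely that $p_N(k)>0$ must hold for the true profile and that each member of $\cP$ must be admissible so that its quantile is defined, are appropriate and do not affect the argument.
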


This is the precise role of prior or auxiliary information about the
complexity profile.  Without such information, Section~\ref{sec:no-go} shows
that conditioning on the observed value \(K_N=k\) alone cannot give a
nontrivial distribution-free guarantee.

\subsection{Computing or Bounding the Profile in Practice}\label{subsec:profile-computation}
Corollary~\ref{cor:law-from-profile} separates the universal probabilistic part of the certificate from a rule-specific statistical input, given by the complexity profile.  This should be read constructively:  the profile is the additional object that must be supplied, bounded, or estimated in order to obtain sharp conditional certification for random-boundary rules.  When no such information is available, the no-go result in Section~\ref{sec:no-go} explains why the observed value \(K_N=k\) alone cannot support a nontrivial distribution-free conditional statement.

There are three practically distinct regimes.

\begin{enumerate}
\item \emph{Analytic profiles.}  In simple projective schemes the profile can be computed exactly.  Scalar order-statistic calibration has deterministic boundary size, fixed-support scenario programs have stable support dimension under the usual nondegeneracy assumptions, and the Pareto-frontier example in Section~\ref{subsec:pareto} has the explicit record profile \(p_n(k)=c(n,k)/n!\) in dimension two.  In such cases the quantile in Corollary~\ref{cor:conditional-pac} is an exact finite-sample certificate with no simulation step.

\item \emph{Structural profile classes.}  In more complex procedures, exact formulas may be unavailable but deterministic properties can still restrict the admissible profiles.  Examples include upper bounds on the boundary size, monotonicity inherited from a recursive construction, decompositions into independent or nested boundary components, or envelopes \(\underline p_n(k)\le p_n(k)\le\overline p_n(k)\).  These constraints define a family \(\cP\) of admissible profiles, and the robust quantile \(q_{N,k}^{\rm rob}(\beta)\) converts that partial structural information into a conservative conditional PAC certificate.

\item \emph{Simulation-certified profiles.}  When the design distribution is known, or when a validated simulator is part of the statistical model, the profile can be estimated directly.  For each pair \((n,k)\) used by the moment or quantile calculation, run the boundary algorithm on \(R\) independent samples of size \(n\), and set
\[
  \widehat p_n(k)=\frac1R\sum_{r=1}^R
  \one\{K_n^{(r)}=k\}.
\]
For any finite grid \(\mathcal G\) of such pairs, the simultaneous binomial band
\[
  \max_{(n,k)\in\mathcal G}|\widehat p_n(k)-p_n(k)|
  \le
  \sqrt{\frac{\log(2|\mathcal G|/\delta)}{2R}}
\]
holds with probability at least \(1-\delta\) over the independent simulation runs.  Intersecting these bands with the elementary constraints \(p_n(k)\ge0\), \(\sum_k p_n(k)=1\), and the Hausdorff moment admissibility constraints implicit in \eqref{eq:profile-moments} gives a certified profile class \(\cP\).  With simulation probability at least \(1-\delta\), this class contains the true profile and the robust quantile gives the conditional PAC guarantee in Corollary~\ref{cor:conditional-pac}.  If the simulator is approximate or bootstrap-based, the profile calculation is model-based and inherits the simulator approximation.
\end{enumerate}

This profile step identifies the precise statistical information needed for conditional certification.  The beta law is recovered when this information reduces to profile stability; when the boundary is random, the profile factor is the finite-sample correction that prevents overconfident certificates.

\section{About Discarded Samples}\label{sec:discarding}

Theorem~\ref{thm:exact-profile} allows discarded samples, but only when they are true boundary
samples.  We next discuss this important distinction.

\subsection{Classical Violated-Discard Bounds}
In the standard sample-and-discard setting for scenario optimization, one removes \(r\) sampled constraints and returns a solution that typically violates them.  The results of Calafiore and of Campi--Garatti cover broad data-dependent removal mechanisms; see \cite{CalafioreSIOPT2010,CampiGaratti2011}.  The resulting risk bounds contain an additional combinatorial prefactor.  Violation of the removed constraints provides a weaker property than boundary equivalence: it does not ensure that those constraints reconstruct the decision, nor that accepted non-boundary samples can be deleted while preserving the removal path.  The prefactor reflects the wider class of admissible removal rules.

\subsection{Structured Support Removal}
More structured schemes, such as cascaded support removal, can have sharper
certificates.  In fully supported convex programs, repeatedly removing support
constraints gives a reproducible boundary candidate consisting of the removed
support constraints and the final support set, see
\cite{RomaoPapachristodoulouMargellos2023,
RomaoMargellosPapachristodoulou2023}.  When this boundary is essential and
projective, our Theorem~\ref{thm:exact-profile} recovers the no-prefactor beta-type law for the
certified acceptance set.  

Appendix~\ref{app:cascaded-support} gives a detailed verification of the cascaded support-removal procedure in \cite{RomaoPapachristodoulouMargellos2023}.  In the fully supported case, with \(r=\ell d\) discarded constraints, the union of the \(\ell\) removed support sets and the final support set has cardinality \((\ell+1)d=r+d\).  Proposition~\ref{prop:cascaded-support-boundary} shows that this union is a proper projective boundary for the certified acceptance set used in the compression argument.  If \(V_N^{\rm cs}\) and \(V_N^{\rm fin}\) denote the risks of the certified set and of the final optimizer, respectively, the following relations hold for every \(N>r+d\):
\[
  \Pp\{V_N^{\rm cs}>\eps\}
  =
  \sum_{i=0}^{r+d-1}\binom{N}{i}\eps^i(1-\eps)^{N-i},
  \qquad
  V_N^{\rm fin}\le V_N^{\rm cs},
\]
under the non-atomicity condition stated in the appendix.  This yields the no-prefactor feasibility guarantee for the final optimizer.  Under the additional tightness assumption of \cite[Theorem~5]{RomaoPapachristodoulouMargellos2023}, the certified set and the final feasible set differ only on a finite zero-probability set, and the same tail formula holds with equality for \(V_N^{\rm fin}\).

The extension in \cite{RomaoMargellosPapachristodoulou2023} accommodates an arbitrary number of discarded constraints through additional bookkeeping.  Its bound can be conservative when the number of discards is not an integer multiple of \(d\), so we do not treat it as an exact fixed-boundary special case here.

\subsection{Order-Statistic Discarding}
Scalar conformal prediction with \(r\) discarded upper-tail scores is a clean
projective case.  The \(r\) discarded scores and the threshold score form a
fixed boundary of size \(r+1\), and the exact law is
\[
  \Pp\{V_N>\eps\}
  =
  \sum_{i=0}^{r}\binom{N}{i}\eps^i(1-\eps)^{N-i}.
\]

The following sufficient condition summarizes the operational meaning of
essential discards.

\begin{proposition}[Essential Discards]\label{prop:essential-discards}
Let an algorithm produce a candidate boundary \(B_n=C_n\cup D_n\), where
\(C_n\) are retained support samples and \(D_n\) are discarded or exception
samples.  Interpret the certified object as including only the information that
is used to define the certified acceptance set, together with any exception
structure or deterministic removal path that is explicitly part of the
certificate.  Suppose that, for every deterministic data set under consideration:
\begin{enumerate}
\item the certified object, and hence the certified acceptance set, can be
reconstructed from \(B_n\);
\item every sample outside \(B_n\) is accepted by the certified acceptance set;
\item no element of \(B_n\) can be removed without changing this certified
object;
\item adding or deleting accepted non-boundary samples does not change the
certified object or the reported boundary.
\end{enumerate}
Then \(B_n\) is a proper projective boundary, and
Theorem~\ref{thm:exact-profile} applies with \(K_n=|B_n|\).
\end{proposition}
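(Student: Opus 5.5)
The plan is to reduce the statement to Proposition~\ref{prop:scenario-boundary}. Take the certified object \(a_I\) to be the object the algorithm outputs on \(T_I\): the certified acceptance set, together with any exception structure or removal path that is part of the certificate. Take \(\Gamma(T_I)\) to be the certified acceptance set. Then conditions (1) to (4) translate into the three hypotheses of that proposition.

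First, condition (1) is the reconstruction identity \eqref{eq:scenario-reconstruction}, namely \(a_{B_n(T_n)}=a_{[n]}\).

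Second, condition (4) yields confirmed-addition stability, and this is the step I expect to be the main obstacle. Suppose \(I\subseteq K\) and \(a_I\) accepts every sample in \(K\setminus I\). I want to show \(a_K=a_I\). The difficulty is that condition (4) speaks of non-boundary samples of a given data set, so I must argue that the added points do not enter \(B_{|K|}(T_K)\). I would first try the following order of steps.
\begin{itemize}
\item By conditions (1) and (4) applied to \(T_I\), deleting the accepted non-boundary points of \(T_I\) leaves \(B_{|I|}(T_I)\) and \(a_I\) unchanged.
\item Adding the points of \(K\setminus I\), which \(a_I\) accepts, is then an addition of accepted non-boundary samples in the sense of (4). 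Hence \(a_K=a_I\) and \(B_{|K|}(T_K)=B_{|I|}(T_I)\).
\end{itemize}
If this reading of (4) is too loose, I would make it explicit as the standing interpretation: ``accepted'' means accepted by the current certified set, and ``non-boundary'' is with respect to the current reported boundary. I would then note that (4) is exactly this two-sided invariance.

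Third, condition (3) identifies \(B_n\) with the decision support set \eqref{eq:scenario-support}. For \(i\in B_n\), condition (3) gives \(a_{[n]\setminus\{i\}}\ne a_{[n]}\). For \(i\notin B_n\), condition (2) says \(z_i\) is accepted, and it is non-boundary, so condition (4) gives \(a_{[n]\setminus\{i\}}=a_{[n]}\). Thus \(B_n=\{i:a_{[n]\setminus\{i\}}\ne a_{[n]}\}\).

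With reconstruction, confirmed-addition stability and the support-set identification in place, Proposition~\ref{prop:scenario-boundary} shows that \((\Gamma,B)\) is proper and projective. Its argument uses condition (2), that full-sample accepted points lie outside \(B_n\), in place of feasibility of the full scenario solution for every sampled constraint. Projectivity can also be read directly from (4): the reported boundary is unchanged when accepted non-boundary samples are deleted.

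The conclusion then follows by invoking Theorem~\ref{thm:exact-profile} with \(K_n=|B_n|\). Measurability and permutation equivariance are inherited from the standing conventions of Section~\ref{sec:setup}.
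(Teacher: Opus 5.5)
Your argument is correct and uses the same dictionary as the paper. Condition (1) is reconstruction, (2) replaces full-sample feasibility, (3) is minimality, and (4) supplies confirmed-addition stability. The difference is the lemma you feed these into.

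\textbf{The paper's route.} It plugs these conditions into Lemma~\ref{lem:primitive}, run with the certified object in place of \(\Gamma\), to get boundary equivalence. It then reads projectivity directly off the clause of (4) saying that the reported boundary is unchanged.

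\textbf{Your route.} You re-run Proposition~\ref{prop:scenario-boundary} instead, which costs some bookkeeping.
\begin{enumerate}
\item You need the two-sided identification of \(B_n\) with the support set \eqref{eq:scenario-support}, not just the inclusion supplied by (3).
\item You use that proposition outside its stated setting, in three respects:
\begin{enumerate}
\item it is written for a scenario-program \(\Gamma\);
\item it uses that the full-sample solution accepts every sampled point, which is false for genuine discards (e.g.\ the rejected upper-tail scores of Section~\ref{subsec:order-statistic});
\item it assumes that equality of objects is equivalent to equality of acceptance sets, which fails once the object carries exception structure or a removal path.
\end{enumerate}
\end{enumerate}
Beyond its listed hypotheses, its proof only needs \(\Gamma\) to be a function of the object, with (2) standing in for full feasibility. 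So the adaptation is sound.

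\textbf{What your route buys.} It never uses the boundary-invariance clause of (4), and in fact only the addition half of (4) is needed. Deletion invariance for an index \(i\notin B_n\) already follows from (1), (2) and confirmed addition: reconstruct from \(B_n\), then add back everything except \(i\). Projectivity then comes out of the support-set argument rather than being postulated via (4).

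\textbf{Two small points.}
\begin{enumerate}
\item Your first attempt at confirmed-addition stability only works under the reading you then make explicit. Whether the points of \(K\setminus I\) are non-boundary for \(T_K\) is exactly what is at stake, and the preliminary deletion step plays no role. Reading (4) as covering the addition of any sample accepted by the current certified set is exactly how the paper uses it, so you can state that reading directly.
\item Condition (2) says that samples outside \(B_n\) are accepted, not that accepted samples lie outside \(B_n\). It is the former that replaces full-sample feasibility.
\end{enumerate}
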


\begin{proof}
The first item is boundary reconstruction, the second is outside-boundary
feasibility, and the third is the minimality condition in
Lemma~\ref{lem:primitive}, with the certified object replacing the raw optimizer
or removal path whenever those objects are part of the certificate.  The fourth
item gives confirmed-addition stability and boundary projectivity.  Hence
Assumptions~\ref{ass:boundary-equivalence} and~\ref{ass:boundary-projectivity}
hold.
\end{proof}

We finally distinguish a mathematical boundary from the set of indices returned
by a particular implementation.  We call such an implementation-level output a
\emph{reported set} and denote it by \(\widetilde B_n(S_n)\).  The reported set
is the subset of training samples that the implementation stores, displays, or
uses as its certificate representation.  It need not itself satisfy
Assumptions~\ref{ass:boundary-equivalence} and~\ref{ass:boundary-projectivity}.
In particular, it may strictly contain a genuine proper projective boundary
\(B_n^\star(S_n)\).

This distinction matters because Theorem~\ref{thm:exact-profile} applies to a
genuine boundary, not to an arbitrary reported set.  If
\(\widetilde B_n(S_n)\) contains superfluous samples, then the implication
encoded in \eqref{eq:boundary-equivalence} can fail for \(\widetilde B_n\):
a superfluous reported point may be held out while all held-out samples are
still accepted.  Thus the event
\(\widetilde B_{N+M}(S_{N+M})\subseteq I\) may be too strong to characterize
the event that all held-out samples are accepted.  The safe use of an
over-reported set is instead as an upper bound on the size of some truly
proper projective boundary.

\begin{proposition}[Over-Reported Boundaries]\label{prop:superfluous}
Let \(s\ge 1\), and let \((\Gamma,B^\star)\) be a proper projective boundary
scheme for the certified acceptance set.  Suppose that, for every \(n\ge s\),
the implementation reports a set \(\widetilde B_n\) such that
\[
  B_n^\star(S_n)\subseteq \widetilde B_n(S_n),
  \qquad
  |\widetilde B_n(S_n)|\le s
  \quad\text{almost surely}.
\]
Then, for every \(N\ge s\) and every \(\epsilon\in(0,1)\),
\[
  \Pp\{V_N>\epsilon\}
  \le
  \sum_{i=0}^{s-1}\binom{N}{i}\epsilon^i(1-\epsilon)^{N-i}.
\]
\end{proposition}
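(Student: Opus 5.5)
The plan is to reduce the statement to Corollary~\ref{cor:bounded_boundary}, applied to the genuine boundary \(B^\star\) rather than to the reported set \(\widetilde B_n\).

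The risk \(V_N\) depends only on the acceptance set \(\Gamma(S_N)\), so it is the same for both reporting conventions. The pair \((\Gamma,B^\star)\) is proper and projective by hypothesis. The inclusion \(B_n^\star(S_n)\subseteq\widetilde B_n(S_n)\) gives \(K_n^\star=|B_n^\star(S_n)|\le|\widetilde B_n(S_n)|\le s\) almost surely for every \(n\ge s=\max\{1,s\}\), using \(s\ge1\). These are exactly the hypotheses of Corollary~\ref{cor:bounded_boundary} for the scheme \((\Gamma,B^\star)\). That corollary yields, for every \(N\ge s\) and \(\epsilon\in(0,1)\),
\[
  \Pp\{V_N>\epsilon\}\le\sum_{i=0}^{s-1}\binom{N}{i}\epsilon^i(1-\epsilon)^{N-i}.
\]
This is the claim.

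The only point needing care is measurability and the almost-sure qualifier. Corollary~\ref{cor:bounded_boundary} requires \(K_n^\star\le s\) almost surely. This follows because the inclusion and the cardinality bound each hold almost surely, and their intersection is again a full-measure event. If the corollary's appendix proof (the padding construction) needs the bound deterministically, one restricts to the measurable regularity class of probability one on which both properties hold for all subcollections. This is permitted by the standing conventions of Section~\ref{sec:setup}, since only finitely many subcollections occur at each sample size.

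I expect no real obstacle. The substantive work is already contained in Corollary~\ref{cor:bounded_boundary}. The conceptual point worth stating is that the over-reported set \(\widetilde B_n\) is never used in the exchangeability argument, so the possible failure of boundary equivalence for \(\widetilde B_n\) is irrelevant; \(\widetilde B_n\) serves only as a certificate of the size bound on \(B^\star\).
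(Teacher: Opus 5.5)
Your proposal is correct and is essentially the paper's own proof: you bound \(K_n^\star=|B_n^\star(S_n)|\le|\widetilde B_n(S_n)|\le s\) almost surely for every \(n\ge s\), then apply Corollary~\ref{cor:bounded_boundary} to the genuine scheme \((\Gamma,B^\star)\). Your extra remarks on the almost-sure qualifier, and on \(\widetilde B_n\) never entering the exchangeability argument, are accurate but not needed beyond what the paper states.
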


\begin{proof}
Let \(K_n^\star=|B_n^\star(S_n)|\).  Since
\(B_n^\star(S_n)\subseteq\widetilde B_n(S_n)\) and
\(|\widetilde B_n(S_n)|\le s\), we have \(K_n^\star\le s\) almost surely for
every \(n\ge s\).  Corollary~\ref{cor:bounded_boundary}, applied to the truly
proper projective boundary scheme \((\Gamma,B^\star)\), gives the displayed
bound.
\end{proof}

Over-reporting can therefore be used conservatively through a genuine boundary-size upper bound.  A more serious failure occurs when violated or otherwise nonprojective discards are treated as boundary samples without first identifying a proper boundary contained in the reported set.  In that case neither Theorem~\ref{thm:exact-profile} nor Corollary~\ref{cor:bounded_boundary} applies; one should use specialized discarding bounds, the stable-compression result in Section~\ref{sec:compression}, or the inner-certificate construction of Proposition~\ref{prop:inner}.

\section{A Stable-Compression Fallback}\label{sec:compression}

Boundary equivalence is stronger than ordinary stability in the sense of
sample compression.  Some algorithms admit a stable compressed representation, that is,
after the reported compression, discard, or exception samples are retained,
all other samples can be deleted without changing the reconstructed decision.
Such algorithms need not satisfy boundary equivalence, because the retained
compressed set may reconstruct the decision without characterizing exactly
which held-out samples would be accepted or would enter the full-sample
boundary.  In these cases one can use a conservative sample-compression bound.

Let \(A(S_N)=a_N\) be the returned decision and let
\(\ell(a,z)\in\{0,1\}\) be the violation loss.  Define
\[
  V_N=\Pp\{\ell(a_N,Z)=1\mid S_N\},
  \qquad
  \widehat V_N=\frac1N\sum_{i=1}^N \ell(a_N,Z_i).
\]
Suppose the algorithm reports a compression set \(C_N\) and possibly a discard
or exception set \(D_N\).  Let
\[
  \kappa(S_N)=S_{C_N\cup D_N},
  \qquad
  b_N=|C_N\cup D_N|.
\]
A reconstruction map \(\rho\) satisfies \(a_N=\rho(\kappa(S_N))\).

\begin{definition}[Stable compressed decision]
The pair \((\kappa,\rho)\) is stable if deleting samples outside
\(C_N\cup D_N\) does not change the reconstructed decision.  Equivalently, for
every \(I\subseteq[N]\) containing \(C_N\cup D_N\), running the scheme on
\(S_I\) reconstructs the same decision as running it on \(S_N\).
\end{definition}

\begin{theorem}[Stable-compression PAC bound]\label{thm:compression}
Assume \((\kappa,\rho)\) is a stable compression scheme.  For every
distribution \(P\), every \(N\ge1\), and every \(\beta\in(0,1)\), with
probability at least \(1-\beta\),
\begin{equation}\label{eq:hk-bound}
  |V_N-\widehat V_N|
  \le
  \sqrt{
    \widehat V_N\,\frac{72}{N}
    \left(2b_N+\log\frac{4e}{\beta}\right)}
  +
  \frac{32}{N}
  \left(2b_N+\log\frac{4e}{\beta}\right).
\end{equation}
In particular, if at most \(r_N\) observed samples have loss one, then
\begin{equation}\label{eq:discard-bound}
  V_N
  \le
  \frac{r_N}{N}
  +
  \sqrt{
    \frac{72r_N}{N^2}
    \left(2b_N+\log\frac{4e}{\beta}\right)}
  +
  \frac{32}{N}
  \left(2b_N+\log\frac{4e}{\beta}\right).
\end{equation}
\end{theorem}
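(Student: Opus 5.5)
The plan is to reduce Theorem~\ref{thm:compression} to a standard sample-compression risk bound of Hoeffding--Bernstein type, in the style of Floyd--Warmuth or Graepel--Herbrich--Shawe-Taylor. The proof runs over all possible compression sets, and stability is what replaces an explicit union bound over a random set.

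First, I will fix a candidate index set \(I\subseteq[N]\) of size \(b\). I will let \(a^{(I)}=\rho(S_I)\) be the decision reconstructed from \(S_I\) alone. Conditional on \(S_I\), the remaining \(N-b\) samples are i.i.d.\ and independent of \(a^{(I)}\). Their empirical violation frequency \(\widehat V^{(I)}_{\rm out}\) is therefore an average of i.i.d.\ Bernoulli variables with mean \(V^{(I)}=\Pp\{\ell(a^{(I)},Z)=1\mid S_I\}\). A relative (Bernstein) deviation inequality then gives, with probability at least \(1-\beta_b\),
\[
  |V^{(I)}-\widehat V^{(I)}_{\rm out}|\le\sqrt{2\widehat V^{(I)}_{\rm out}\,c\log(2/\beta_b)/(N-b)}+c'\log(2/\beta_b)/(N-b),
\]
for absolute constants \(c,c'\). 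I would use the empirical-Bernstein form, obtained by solving the quadratic in \(\sqrt V\). Stability is used at this point: on the event \(C_N\cup D_N=I\), running the scheme on \(S_I\) reproduces \(a_N\). Hence \(a^{(I)}=a_N\) and \(V^{(I)}=V_N\) on that event, and the conclusion of the theorem holds whenever the fixed-\(I\) inequality holds for the realized \(I\).

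Second, I will take a union bound over all \(I\) and all sizes \(b\). I will allocate confidence \(\beta_b=\beta\,2^{-b-1}\binom{N}{b}^{-1}\) or similar, so that the total failure probability is at most \(\beta\). The logarithmic term then becomes
\[
  \log\binom{N}{b}+(b+1)\log2+\log(1/\beta).
\]
Bounding \(\log\binom{N}{b}\) by \(b\log(eN/b)\) does not give the stated form with \(2b_N\) inside and \(1/N\) outside. I will therefore handle this by a peeling or Chernoff argument on the full sample rather than on the \(N-b\) held-out points. Concretely, I will pass between \(\widehat V_N\) and \(\widehat V^{(I)}_{\rm out}\) by noting that they differ by at most \(b/N\) plus a rescaling factor \(N/(N-b)\). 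When \(b\le N/2\), both corrections are absorbed into the \(2b_N/N\) terms through the generous constants \(72\) and \(32\). When \(b>N/2\), the right-hand side of \eqref{eq:hk-bound} exceeds \(1\), so that case is trivial.

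The main obstacle is matching the exact constants and the \(\log(4e/\beta)\) form, which the paper presumably quotes from a known compression bound. My intended route is to cite the relative-deviation compression theorem verbatim and verify only that stability implies the hypothesis of that theorem. That hypothesis is that the reconstructed hypothesis depends only on \(S_{C_N\cup D_N}\) and is consistent with the reconstruction on every superset. Finally, \eqref{eq:discard-bound} follows from \eqref{eq:hk-bound} by substituting \(\widehat V_N\le r_N/N\) into the one-sided upper bound.
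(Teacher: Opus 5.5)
Your self-contained route cannot produce \eqref{eq:hk-bound}. A union bound over all candidate index sets of size \(b\) puts \(\log\binom{N}{b}\ge b\log(N/b)\) into the confidence term. For fixed \(b\) this grows like \(\log N\), so it cannot be absorbed into the absolute constants \(72\) and \(32\). The passage between \(\widehat V_N\) and \(\widehat V^{(I)}_{\rm out}\) only costs \(O(b/N)\) and does not touch that term.

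The ``peeling or Chernoff argument on the full sample'' is left unspecified, and no refinement of this kind can succeed. The reason is that stability does no actual work in your argument. The identity \(\rho(S_I)=a_N\) on the event \(C_N\cup D_N=I\) holds for every compression scheme, by the definition \(a_N=\rho(\kappa(S_N))\). An argument that never uses stability therefore applies to arbitrary compression schemes. For those, the \(\log(N/b)\) factor is genuinely unavoidable: a size-one scheme can select, among \(N\) candidate reconstructions, one that is consistent with the sample yet has risk of order \(\log N/N\). What you would actually prove is the classical Littlestone--Warmuth / Floyd--Warmuth / Graepel--Herbrich--Shawe-Taylor bound, with \(b\log(eN/b)\) in place of \(2b_N\).

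The log-free form \(2b_N+\log(4e/\beta)\) is exactly the content of the stable-compression theory. This is the work of Bousquet, Hanneke, Moran and Zhivotovskiy, and its relative-deviation (agnostic) extension by Hanneke and Kontorovich; their proofs use stability essentially rather than enumerating subsets. The paper's proof is simply an appeal to Corollary~18 of \cite{HannekeKontorovich2021} for the binary loss \(\ell(a,z)\). It then uses the same monotonicity-in-\(\widehat V_N\) step that you give for \eqref{eq:discard-bound}, and that part of your proposal is fine.

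Your closing fallback, citing a relative-deviation compression theorem verbatim, matches the paper only if the cited result is that stable-compression theorem. In that case the hypothesis to check is that every \(I\) with \(C_N\cup D_N\subseteq I\subseteq[N]\) reconstructs \(a_N\). The classical bounds you name at the outset do not yield the stated inequality.
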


\begin{proof}
Equation \eqref{eq:hk-bound} is Corollary~18 of 
\cite{HannekeKontorovich2021}, applied to the binary loss
\(\ell(a,z)\).  If at most \(r_N\) samples have loss one, then
\(\widehat V_N\le r_N/N\), and the right side of
\eqref{eq:hk-bound} is nondecreasing in \(\widehat V_N\).
\end{proof}

This fallback is usually looser than the exact profile law.  Its value is that
it remains valid when the reported discards are stable compression elements but
not proper boundary elements.

\section{A Multi-Risk Extension}\label{sec:multi-risk}

We next discuss a mixed-moment law for several risks.  Let
\(h=1,\ldots,H\) index risk components.  Component \(h\) has sample size
\(N_h\), distribution \(P_h\), acceptance set \(\Gamma_h(S)\), and boundary
\(B_h(S)\).  The blocks are independent, and samples within each block are
i.i.d.  Define
\[
  V_h(S)=P_h\{Z_h\notin\Gamma_h(S)\mid S\},
  \qquad K_h=|B_h(S)|.
\]
For vectors
\[
  \mathbf N=(N_1,\ldots,N_H),\qquad
  \mathbf k=(k_1,\ldots,k_H),
\]
write
\[
  p_{\mathbf n}(\mathbf k)=\Pp\{(K_1,\ldots,K_H)=\mathbf k\}
\]
when the block sizes are \(\mathbf n\).

\begin{assumption}[Multi-risk boundary condition]\label{ass:multi}
For every deterministic collection of augmented blocks and every split into
design subsets \(I_h\) and held-out subsets \(J_h\),
\[
  T_{h,J_h}\subseteq\Gamma_h(T_I)\quad\text{for every }h
  \quad\Longleftrightarrow\quad
  B_h(T)\subseteq I_h\quad\text{for every }h.
\]
On this event, \(B_h(T_I)=B_h(T)\) for every \(h\), after re-indexing.
\end{assumption}

\begin{theorem}[Multi-risk profile law]\label{thm:multi}
Under Assumption~\ref{ass:multi}, for every nonnegative integer vector
\(\mathbf M=(M_1,\ldots,M_H)\),
\begin{equation}\label{eq:multi-unconditional}
  \Ee\!\left[
    \prod_{h=1}^H(1-V_h)^{M_h}
    \one_{\{\mathbf K_{\mathbf N}=\mathbf k\}}
  \right]
  =
  p_{\mathbf N+\mathbf M}(\mathbf k)
  \prod_{h=1}^H
  \frac{\binom{N_h}{k_h}}{\binom{N_h+M_h}{k_h}}.
\end{equation}
If \(p_{\mathbf N}(\mathbf k)>0\), then
\begin{equation}\label{eq:multi-conditional}
  \Ee\!\left[
    \prod_{h=1}^H(1-V_h)^{M_h}
    \mid \mathbf K_{\mathbf N}=\mathbf k
  \right]
  =
  \frac{p_{\mathbf N+\mathbf M}(\mathbf k)}{p_{\mathbf N}(\mathbf k)}
  \prod_{h=1}^H
  \frac{\binom{N_h}{k_h}}{\binom{N_h+M_h}{k_h}}.
\end{equation}
\end{theorem}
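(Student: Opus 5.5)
The plan is to mirror the proof of Theorem~\ref{thm:exact-profile} block by block, using independence across blocks so that the held-out experiment factorizes. For each block \(h\), introduce fresh samples \(Y_{h,1},\ldots,Y_{h,M_h}\) i.i.d.\ from \(P_h\), independent of each other and of the design sample \(S=(S_1,\ldots,S_H)\). Conditional on \(S\), all fresh points are accepted by their respective sets with probability \(\prod_h(1-V_h)^{M_h}\). So the left side of \eqref{eq:multi-unconditional} equals the probability that all fresh points in every block are accepted by \(\Gamma_h(S)\) and that \(\mathbf K_{\mathbf N}=\mathbf k\).

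Next, form the augmented blocks \(T_h=(S_h,Y_{h,1},\ldots,Y_{h,M_h})\) of size \(N_h+M_h\). Within each block the augmented sample is exchangeable, and the blocks are independent. Hence the probability above equals the probability of the same event in the following experiment. Draw the augmented collection \(T\). Independently choose, for each \(h\), a uniform \(N_h\)-subset \(I_h\subseteq[N_h+M_h]\), with the choices independent across \(h\). Then ask that \(T_{h,J_h}\subseteq\Gamma_h(T_I)\) for every \(h\) and that the boundary sizes computed on \(T_I\) equal \(\mathbf k\). Permutation invariance of the rules is what justifies this relabeling. Since the permutation acts within blocks only, the joint law of \(T\) is invariant, by independence and i.i.d.-ness within blocks.

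By the equivalence part of Assumption~\ref{ass:multi}, the acceptance event is \(\{B_h(T)\subseteq I_h\ \forall h\}\). On that event, the projectivity part gives \(B_h(T_I)=B_h(T)\), so the sizes computed on \(T_I\) coincide with the full augmented sizes. The event therefore becomes \(\{B_h(T)\subseteq I_h\ \forall h,\ \mathbf K_{\mathbf N+\mathbf M}(T)=\mathbf k\}\). Now condition on \(T\) with \(\mathbf K_{\mathbf N+\mathbf M}(T)=\mathbf k\). Each \(B_h(T)\) is a fixed \(k_h\)-set, and the \(I_h\) are independent uniform subsets. The conditional probability is therefore \(\prod_h \binom{N_h+M_h-k_h}{N_h-k_h}/\binom{N_h+M_h}{N_h}=\prod_h\binom{N_h}{k_h}/\binom{N_h+M_h}{k_h}\). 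Taking expectations yields \eqref{eq:multi-unconditional}, and dividing by \(p_{\mathbf N}(\mathbf k)\) yields \eqref{eq:multi-conditional}.

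The main obstacle is the exchangeability step. In the multi-risk setting, \(\Gamma_h\) may depend on the whole collection \(T_I\), including other blocks. So I must check that replacing the design blocks by uniformly chosen subsets of every augmented block simultaneously preserves the joint law of (design data, held-out data). This needs blockwise permutation invariance of every \(\Gamma_h\) and \(B_h\) in all blocks' indices, which I would state explicitly as the standing permutation-invariance convention applied blockwise. The degenerate conventions (\(k_h=0\) or \(M_h=0\)) are handled exactly as in the single-risk case.
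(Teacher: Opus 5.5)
Your proposal is correct and follows the paper's proof essentially step for step. The shared steps are: fresh samples in each block, blockwise exchangeability to pass to independent uniform design subsets, Assumption~\ref{ass:multi} for joint equivalence and projectivity, and the product of containment probabilities conditional on the augmented data. Your explicit remark that blockwise permutation invariance of every \(\Gamma_h\) and \(B_h\) is needed for the relabeling step spells out what the paper leaves implicit in its phrase ``exchangeability within each block.''
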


\begin{proof}
Add \(M_h\) fresh test samples to block \(h\).  Conditional on the training data,
the probability that all fresh samples are accepted in all components is
\(\prod_h(1-V_h)^{M_h}\).  Exchangeability within each block lets us replace the
fresh-sample experiment by a uniformly chosen design subset of size \(N_h\) in
each augmented block.  Assumption~\ref{ass:multi} then says that all held-out
samples are accepted exactly when every design subset contains its component
boundary, and projectivity identifies the restricted complexity vector with the
augmented one.  Conditional on the augmented data and on
\(\mathbf K_{\mathbf N+\mathbf M}=\mathbf k\), the probability that each design
subset contains its fixed \(k_h\)-point boundary is the product in
\eqref{eq:multi-unconditional}.  Taking expectations and then conditioning
proves the result.
\end{proof}

\begin{example}[Dedicated reserve sizing with several service risks]\label{ex:multi-reserve}
A concrete case satisfying Assumption~\ref{ass:multi} is a reserve-sizing
problem with \(H\) dedicated service regions.  Region \(h\) has demand
samples \(D_{h,1},\ldots,D_{h,N_h}\) drawn from \(P_h\), and a planner chooses
nonnegative reserve capacities by
\[
  \min_{x_1,\ldots,x_H}\sum_{h=1}^H c_h x_h
  \quad\text{subject to}\quad
  D_{h,i}\le x_h,
  \qquad i=1,\ldots,N_h,\quad h=1,\ldots,H .
\]
Thus \(x_h(S)=\max_{i\le N_h}D_{h,i}\),
\(\Gamma_h(S)=\{d:d\le x_h(S)\}\), and \(V_h\) is the conditional
probability of a shortage in region \(h\).  Let \(B_h(S)\) be the
index of the demand sample attaining the maximum in block \(h\), with the
fixed tie-breaking convention of Section~\ref{sec:setup}.

For any augmented collection of blocks and any design subsets \(I_h\), all
held-out demands are accepted in every region if and only if, in every block,
the augmented block maximum is retained in \(I_h\).  This is exactly
\(B_h(T)\subseteq I_h\) for all \(h\).  Once these maxima are retained,
deleting accepted nonmaximal demands cannot change any \(x_h\), so the
projectivity part of Assumption~\ref{ass:multi} also holds.  Under continuous
demand distributions \(K_h=1\) for all \(h\), and Theorem~\ref{thm:multi}
reduces to the joint mixed moments of independent beta laws for the shortage
probabilities.  The example also shows the limitation of the assumption: if a
single shared reserve could be reallocated across regions, then retaining each
component's individual maximum would generally no longer be equivalent to
retaining the joint boundary of the coupled decision.
\end{example}

If
\[
  p_{\mathbf N+\mathbf M}(\mathbf k)=p_{\mathbf N}(\mathbf k)
  \qquad\text{for all }\mathbf M\ge0,
\]
then the conditional mixed moments factor into beta mixed moments, and the
components \(V_h\) are conditionally independent with
\(V_h\sim\Beta(k_h,N_h-k_h+1)\), with a point mass at zero when \(k_h=0\).
When the profile is not stable, the ratio
\(p_{\mathbf N+\mathbf M}(\mathbf k)/p_{\mathbf N}(\mathbf k)\) carries the
dependence information.  This gives a route to non-Bonferroni certificates when
a joint or componentwise projective boundary is available.

A common-sample joint-risk version is obtained by applying the single-risk
theorem to the joint acceptance set \(\Gamma_\cap(S)=\cap_h\Gamma_h(S)\).  If
that joint set has a proper boundary \(B_\cap\), then the risk
\[
  V_\cup(S)=\Pp\{Z\notin\Gamma_\cap(S)\mid S\}
\]
satisfies the same moment law with the profile of \(B_\cap\).

\section{No-Go Results}\label{sec:no-go}

The next counterexamples show why a universal beta theorem cannot be based only on the event \(K_N=k\).

\begin{proposition}[A nominal boundary is not enough]\label{prop:circle-gap}
There is a permutation-invariant, empirically consistent algorithm with a
natural two-point nominal boundary for which the \(s=2\) beta upper bound fails.
\end{proposition}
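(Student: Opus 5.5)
The counterexample will live on the circle, with the boundary taken to be the two endpoints of the largest empty arc. Let \(\cZ\) be the unit circle \(\Rr/\mathbb Z\) with \(P\) uniform. For a finite design set \(T_I\) with \(|I|\ge2\), let \(G(T_I)\) be the longest open arc between cyclically consecutive sample points, with ties broken by a fixed equivariant rule. Set
\[
\Gamma(T_I)=\cZ\setminus G(T_I),
\]
and let the nominal boundary \(B_n(T_n)\) be the two endpoints of that arc, so that \(K_n=2\) almost surely. This rule has three properties that are immediate to check:
\begin{itemize}
\item it is permutation invariant;
\item it is empirically consistent, since every sample point lies in the closed complement of an open gap and is therefore accepted;
\item the two endpoints do reconstruct \(\Gamma\) once one knows they bound the largest gap.
\end{itemize}
By construction, \(V_N\) equals the length of the largest spacing among \(N\) i.i.d.\ uniform points on the circle.

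Next I compare this with the \(s=2\) bound. The tail bound of Corollary~\ref{cor:fixed-size} or \ref{cor:bounded_boundary} with \(s=2\) reads
\[
\Pp\{V_N>\eps\}\le(1-\eps)^N+N\eps(1-\eps)^{N-1}\le(1+N\eps)e^{-(N-1)\eps}.
\]
I will choose \(\eps_N=\tfrac12\log N/N\). The right side is then of order \(N^{-1/2}\log N\), which tends to \(0\). By contrast, the classical extreme-spacing result gives \(N\max\text{spacing}/\log N\to1\) in probability. This follows from the representation of the spacings as normalized i.i.d.\ exponentials divided by their sum, together with the fact that the maximum of \(N\) exponentials is \(\log N+O_P(1)\). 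Hence \(\Pp\{V_N>\eps_N\}\to1\), and for all sufficiently large \(N\) the bound fails. To make the statement concrete, I would either exhibit an explicit \(N\) or derive a short self-contained lower bound. The lower bound can come from the exponential representation: the probability that all \(N\) exponentials lie below \(\tfrac23\log N\) is \((1-N^{-2/3})^N\to0\), while the sum \(\sum_i E_i\le(1+\delta)N\) holds with high probability by Chebyshev's inequality.

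Finally, I will explain why the earlier theory does not apply, namely that boundary equivalence fails. Consider a held-out point falling inside the design set's largest gap. It is rejected, yet after augmentation it splits that gap, and the full-sample largest gap may be a different arc not adjacent to it. So the held-out point is rejected without belonging to \(B_n(T_n)\). Exhibiting one explicit configuration of four or five points makes this concrete. This configuration shows the failure is structural rather than a consequence of the nominal size two.

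The main obstacle is the quantitative extreme-spacing estimate. I expect the exponential-representation argument with a union/complement bound and Chebyshev's inequality to suffice. If needed, a direct inclusion–exclusion computation of \(\Pp\{\max\text{spacing}\le x\}\) for the uniform circle is an alternative route.
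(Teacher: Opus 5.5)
Your construction is the paper's: the complement of the largest open gap on the uniform circle, with the gap's two endpoints as nominal boundary. Your argument is correct, but you verify the failure by a much heavier route than needed.

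The paper uses a deterministic pigeonhole fact. The \(N\) gaps sum to one, so the largest is at least \(1/N\), and strictly larger almost surely; hence \(\Pp\{V_N>1/N\}=1\). The \(s=2\) tail at \(\eps=1/N\) equals \((2-1/N)(1-1/N)^{N-1}\). This is the probability that a \(\mathrm{Bin}(N,1/N)\) variable is at most one, so it is strictly below one for every \(N\ge2\). That gives failure at every sample size with no probabilistic estimates.

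Your extreme-spacing argument (exponential representation, \(N\max\text{spacing}/\log N\to1\), the choice \(\eps_N=\tfrac12\log N/N\)) is sound. As written, though, it only certifies failure for sufficiently large \(N\). What it buys is a quantitatively much stronger conclusion. At the scale \(\eps_N\) the true tail tends to one while the nominal bound tends to zero, so the beta bound is wrong by a logarithmic factor in the risk scale, not merely by a constant in probability.

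Your diagnosis also differs slightly from the paper's. You exhibit a failure of the ``\(\Leftarrow\)'' direction of boundary equivalence: a rejected held-out point that does not enter the full-sample boundary. The paper instead attributes the failure to projectivity. Both properties fail here. Projectivity fails because deleting accepted non-boundary points can merge two smaller gaps into a new largest gap.
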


\begin{proof}
Let \(\cZ\) be the unit circle with the uniform distribution.  Given
\(N\ge2\) sample points, let \(G_N\) be the largest open circular gap between
consecutive sample points, with deterministic tie-breaking, and set
$
  \Gamma(S_N)=\cZ\setminus G_N$.
All samples are accepted.  The violation probability is the length of the
largest gap, \(V_N=|G_N|\).  The two endpoints of \(G_N\) are a natural nominal
boundary.  But the \(N\) gaps sum to one, so the largest gap is strictly larger
than \(1/N\) almost surely.  Hence
\[
  \Pp\{V_N>1/N\}=1.
\]
The beta tail with \(s=2\) at \(\eps=1/N\) equals
\[
  (1-1/N)^N+N(1/N)(1-1/N)^{N-1}
  =
  \left(2-\frac1N\right)\left(1-\frac1N\right)^{N-1}<1.
\]
Thus the beta upper bound fails.  Boundary projectivity fails because adding an
accepted point can split the largest gap and change the decision.
\end{proof}

\begin{theorem}[No conditional theorem from complexity alone]\label{thm:no-go}
Fix \(N\ge1\), \(k\in\{1,\ldots,N\}\), and \(q<1\).  There exists a
permutation-invariant, empirically consistent, sample-compressed algorithm such
that \(\Pp\{K_N=k\}>0\) but
\[
  \Pp\{V_N>q\mid K_N=k\}=1 .
\]
Consequently, any distribution-free conditional guarantee of the form
\[
  \Pp\{V_N\le u_N(k,\beta)\mid K_N=k\}\ge1-\beta
\]
valid for all such algorithms must have \(u_N(k,\beta)=1\) for every
\(\beta<1\).
\end{theorem}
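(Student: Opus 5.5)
The plan is to build, for the given \(N\), \(k\) and \(q<1\), an explicit algorithm on a simple non-atomic space. The algorithm will have an honest-looking complexity \(K_N\) and will hit \(K_N=k\) with positive probability. On exactly that event it will output an acceptance set that still contains every sample but has probability at most \(1-q'\) for some \(q'>q\). The natural choice is \(\cZ=[0,1]\) with the uniform law \(P\), so a finite sample is almost surely a set of distinct points.

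First I fix a permutation-invariant, measurable rule that produces a complexity, such as the scalar order-statistic rule of Proposition~\ref{prop:order-statistic}. I let \(K_N\) be the size of some measurable symmetric statistic of the sample. For example, with \(m=\lceil kN^{-1}\cdot\rceil\), I declare \(K_N(S_N)=k\) when the number of sample points in \([0,1/2]\) equals a prescribed value \(j\in\{0,\ldots,N\}\). This event has positive binomial probability. The boundary \(C_N\) is then the \(k\) smallest points, or some other canonical \(k\)-subset, and \(K_N=k\) on that event. Off the event I use the \(k'\ne k\) smallest points, say \(k'=k+1\) if \(k<N\) and \(k'=k-1\) otherwise.

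Second, on \(\{K_N=k\}\) I set
\[
\Gamma(S_N)=\{Z_1,\ldots,Z_N\}\cup A,
\]
where \(A\) is a fixed set with \(P(A)<1-q\). If \(k\ge1\) this can even be reconstructed from the compressed points alone: let \(\Gamma\) be a union of tiny intervals of total length \(\eta<1-q\) around data-determined locations. A cleaner option is \(\Gamma(S_N)=\{Z_1,\ldots,Z_N\}\), a finite and hence null set. Then \(V_N=1>q\) surely on the event, the rule is empirically consistent since all samples are accepted, and it is permutation invariant.

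The main obstacle is the clause ``sample-compressed''. The acceptance set, and hence the decision, must be reconstructible from the \(k\) compressed points, whereas the finite set \(\{Z_i\}\) requires all \(N\) points. I would handle this in one of two ways. One option is to read compression as in Section~\ref{sec:compression}, where the reported set may include exception samples, and put all \(N\) points in \(C_N\cup D_N\). That conflicts with \(K_N=k\) when \(k<N\), so I prefer the second option. Define \(\Gamma\) as a set containing all samples but of small measure, reconstructed from the \(k\) boundary points together with the sample size. For instance, take the lower orthant below the largest compressed point on a transformed scale. A concrete choice is \(\cZ=[0,1]\) with \(\Gamma=[0,\max_i Z_i]\), which is reconstructible from the single maximum. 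Then \(V_N=1-\max_i Z_i\) has positive probability of exceeding \(q\), but not conditional probability one.

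To get probability one, I would condition on an event defined by the compressed points themselves. Declare \(K_N=k\) exactly when the \(k\) largest points all lie in \([0,1-q)\), which forces every sample into \([0,1-q)\). This event has positive probability \((1-q)^N\). On it, \(\Gamma=[0,\max Z_i]\) has \(V_N=1-\max Z_i>q\) surely, reconstruction uses only the top-\(k\) points, and stability holds because deleting non-top-\(k\) points changes neither the maximum nor the event. Off the event, I report \(k'\ne k\) top points. I then check measurability and invariance and that \(k\le N\) is used.

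The final consequence follows immediately. A guarantee \(\Pp\{V_N\le u\mid K_N=k\}\ge1-\beta>0\) valid for this algorithm forces \(u\ge q\). Since \(q<1\) is arbitrary, \(u_N(k,\beta)=1\).
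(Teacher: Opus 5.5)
Your final construction (fifth paragraph) proves the statement as written; the earlier material (the count-in-$[0,1/2]$ event, the garbled definition of $m$, the finite-set option) is abandoned and should be dropped. Your skeleton matches the paper's. Both use the uniform law on $[0,1]$, and both define $\{K_N=k\}$ as the event that all samples are small: $\max_i Z_i<1-q$ for you, $\max_i Z_i\le\delta$ with $\delta+k\eta<1-q$ in the paper.

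The real difference is how complexity $k$ is produced. You keep the natural rule $\Gamma=[0,\max_i Z_i]$ everywhere and only change the reported set: the top $k$ points on the event, $k'\neq k$ points off it. So $K_N$ is a data-dependent label correlated with the risk. This is simpler, the decision is natural, and the scheme is even stable in the sense of Section~\ref{sec:compression}, since deleting unreported points never changes the maximum. The cost is that for $k\ge 2$ the reported set on the event contains $k-1$ superfluous points. That is exactly the over-reported situation of Section~\ref{sec:discarding}, so one can object that the genuine compression size there is $1$. (For $k=1$ your reported set is just the maximizer, and the objection disappears.)

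The paper instead outputs the thin set $[0,\delta]\cup\bigcup_j I_j(Z_{(j)})$, whose interval positions injectively encode the $k$ smallest order statistics. All $k$ reported points are then genuinely needed to reconstruct $\Gamma$, so the obstruction survives even if compression sets must be irreducible. The price is an artificial acceptance set, and off the event a trivial decision $[0,1]$ with $K_N=0$.

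Two details need fixing. When $N=k=1$ your rule gives $k'=0$, and off the event the decision $[0,Z_1]$ cannot be reconstructed from an empty compression set. In that case output a fixed set such as $[0,1]$ off the event, as the paper does. Also, the event probability is $(1-q)^N$ only for $q\ge 0$; for $q<0$ it equals $1$, which is harmless.
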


\begin{proof}
Take \(\cZ=[0,1]\) with the uniform distribution.  Choose \(\delta,\eta>0\)
such that \(\delta+k\eta<1-q\).  Let
$  E_\delta=\{\max_i Z_i\le\delta\}$.
This event has probability \(\delta^N>0\).

On \(E_\delta^c\), set \(\Gamma(S_N)=[0,1]\) and report \(K_N=0\).  On
\(E_\delta\), let \(Z_{(1)}<\cdots<Z_{(k)}\) be the \(k\) smallest order
statistics.  Choose \(k\) disjoint slots in \((\delta,1)\), each of length
larger than \(\eta\), and in slot \(j\) place an interval
\(I_j(Z_{(j)})\) of length \(\eta\), with its left endpoint depending
injectively and measurably on \(Z_{(j)}\).  Define
$
  \Gamma(S_N)=[0,\delta]\cup\bigcup_{j=1}^k I_j(Z_{(j)})$.
All observed samples lie in \([0,\delta]\), so the rule is empirically
consistent.  On \(E_\delta\),
\[
  P(\Gamma(S_N))=\delta+k\eta,
  \qquad
  V_N=1-\delta-k\eta>q .
\]
The \(k\) interval locations encode \(Z_{(1)},\ldots,Z_{(k)}\), so the reported
decision is reconstructed from those \(k\) samples.  The construction is
permutation invariant because it uses order statistics.  Therefore
\(\{K_N=k\}=E_\delta\) up to null sets, and the conditional statement follows.
\end{proof}

This theorem is consistent with the conditional-risk obstruction discussed in
\cite{GarattiCampi2023Conditional}.  A nontrivial conditional certificate based on \(K_N=k\) therefore requires structural assumptions, profile information, or both.

\section{Numerical Illustrations}\label{sec:numerics}

Numerical experiments are provided to illustrate how the exact profile factor changes
finite-sample risk statements when the observed boundary size is random.

\subsection{A Random-Support Scenario Envelope} \label{subsec:numerics-envelope}

For the coordinatewise envelope in Section~\ref{subsec:coordinatewise-envelope},
\(1-V_N\) is the product of the two coordinatewise sample maxima.  Under the
uniform distribution on \([0,1]^2\), these two maxima are independent
\(\Beta(N,1)\) variables.  Therefore, for \(0\le\varepsilon<1\),
\begin{equation}\label{eq:coordinate-envelope-tail}
  \mathbb P\{V_N>\varepsilon\mid K_N=k\}
  =(1-\varepsilon)^N\{1-N\log(1-\varepsilon)\},
  \qquad k=1,2.
\end{equation}
This formula agrees with the moment sequence
\eqref{eq:coordinate-envelope-moments}.  It also makes clear why inserting the
observed random support size into a fixed-dimension beta formula is not valid.
For \(N=20\), the conditional mean is
\[
  \mathbb E[V_{20}\mid K_{20}=k]
  =1-\left(\frac{20}{21}\right)^2
  =0.09297,
  \qquad k=1,2.
\]
The fixed-dimension beta means would be \(1/21=0.04762\) for \(k=1\) and
\(2/21=0.09524\) for \(k=2\), see Figure~\ref{fig:coordinatewise-tail}.

\begin{figure}[tbp]
  \centering
  \includegraphics[width=.65\linewidth]{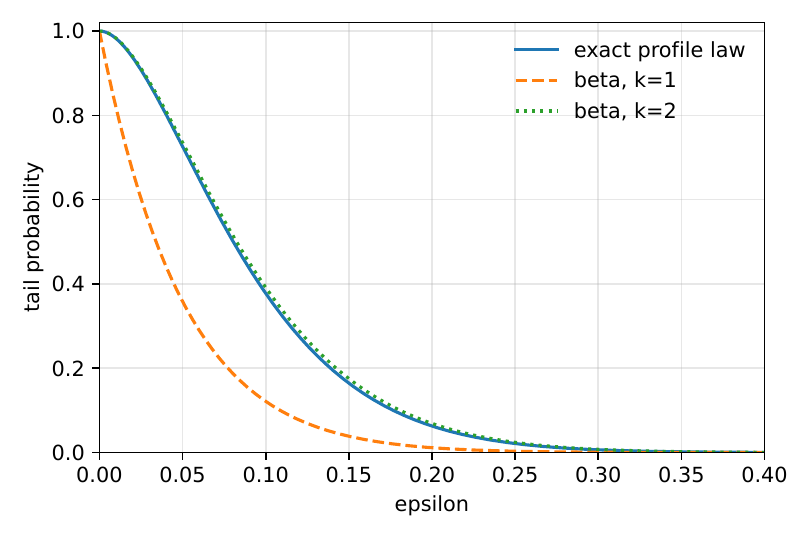}
  \caption{Exact conditional tail for the coordinatewise scenario envelope with
  \(N=20\), compared with the two beta tails that would be obtained by treating
  the random support size as fixed.  The same projective-profile law holds on
  both events \(K_N=1\) and \(K_N=2\).}
  \label{fig:coordinatewise-tail}
\end{figure}

\subsection{Pareto-Frontier Vector-Score Calibration}\label{subsec:numerics-pareto}

We next return to the Pareto-frontier rule of
Proposition~\ref{prop:pareto}.  Figure~\ref{fig:pareto-region} shows one
calibration sample and the corresponding lower-orthant acceptance boundary.  A
new score vector is rejected exactly when it lies above the staircase, in which
case it would become a new maximal sample after augmentation.

\begin{figure}[tbp]
  \centering
  \includegraphics[width=.60\linewidth]{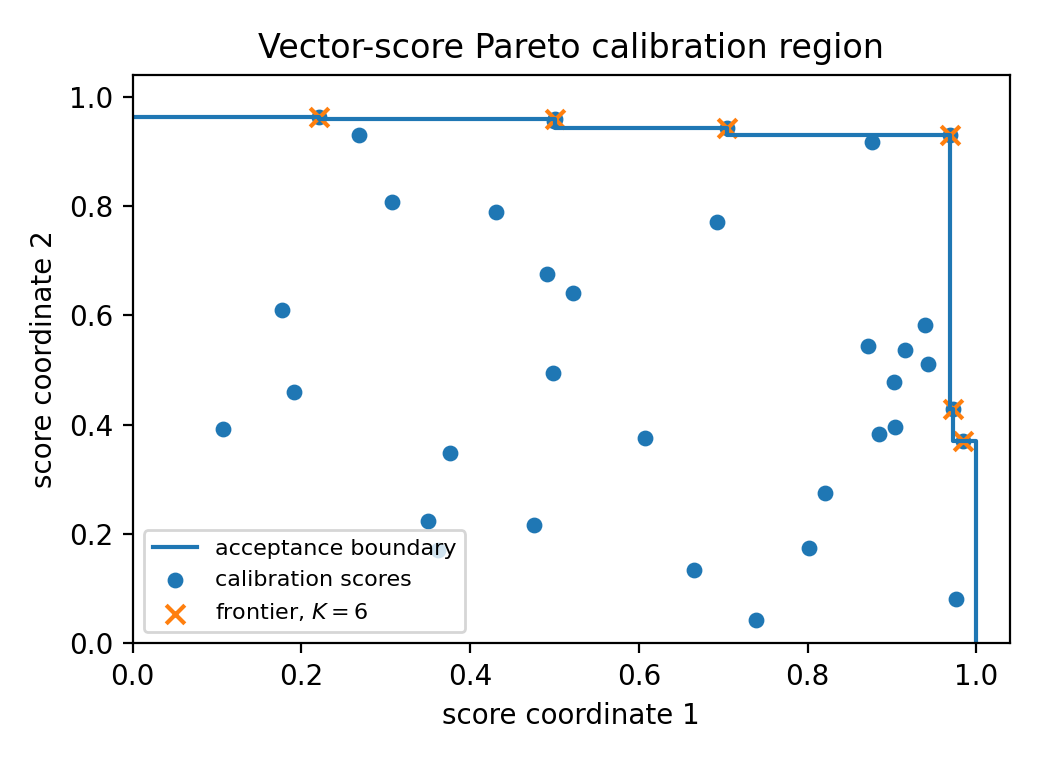}
  \caption{A bivariate score sample and its Pareto-frontier acceptance boundary.
  The marked points are the nondominated calibration scores, i.e., the boundary
  samples in Proposition~\ref{prop:pareto}.}
  \label{fig:pareto-region}
\end{figure}

For i.i.d. uniform scores in \([0,1]^2\), the frontier-size profile is the record
profile \(p_n(k)=c(n,k)/n!\), where \(c(n,k)\) is the unsigned Stirling number of
the first kind.  Table~\ref{tab:pareto-means} and Figure~\ref{fig:pareto-profile-mean} compare the exact conditional mean from
\eqref{eq:first-moment} with the beta mean \(k/(N+1)\) for \(N=20\).  The beta
mean is smaller for small observed frontiers and larger for large observed
frontiers; neither direction is uniformly safe.

\begin{table}[tbp]
\centering
\begin{tabular}{@{}rrrr@{}}
\toprule
\(k\) & \(p_{21}(k)/p_{20}(k)\) & exact mean & beta mean \(k/21\) \\
\midrule
1 & 0.95238 & 0.09297 & 0.04762 \\
2 & 0.96580 & 0.12618 & 0.09524 \\
3 & 0.98312 & 0.15733 & 0.14286 \\
4 & 1.00457 & 0.18678 & 0.19048 \\
5 & 1.03061 & 0.21477 & 0.23810 \\
6 & 1.06193 & 0.24148 & 0.28571 \\
7 & 1.09947 & 0.26702 & 0.33333 \\
8 & 1.14450 & 0.29150 & 0.38095 \\
9 & 1.19874 & 0.31501 & 0.42857 \\
\bottomrule
\end{tabular}
\caption{Exact conditional means for the two-dimensional Pareto-frontier rule
with \(N=20\).}
\label{tab:pareto-means}
\end{table}

\begin{figure}[tbp]
  \centering
  \includegraphics[width=.72\linewidth]{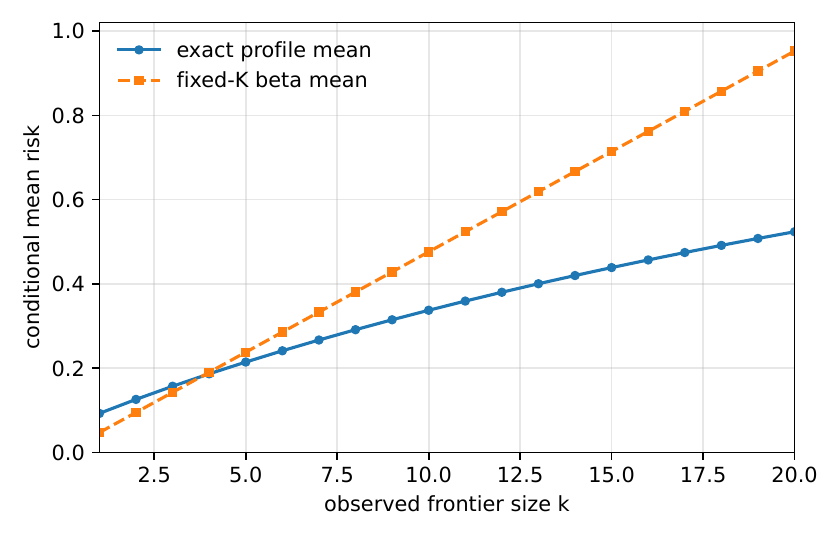}
  \caption{Exact profile means and fixed-\(K\) beta means for the
  two-dimensional Pareto-frontier rule with \(N=20\).}
  \label{fig:pareto-profile-mean}
\end{figure}

For \(k=3\), the exact first-moment calculation gives
\[
  p_{20}(3)=0.2748198358,
  \qquad
  \frac{p_{21}(3)}{p_{20}(3)}=0.9831174498,
\]
and therefore
\[
  \mathbb E[V_{20}\mid K_{20}=3]=0.1573279001.
\]
The beta law with fixed complexity \(3\) would instead give \(3/21=0.1428571429\).
Figure~\ref{fig:pareto-cdf} and Table~\ref{tab:pareto-conditional-sim} compare the empirical conditional distribution,
obtained by direct Monte Carlo conditioning on \(K_{20}=3\), with the incorrect
\(\Beta(3,18)\) comparator.

\begin{table}[tbp]
\centering
\begin{tabular}{@{}rrrr@{}}
\toprule
Monte Carlo mean & Monte Carlo 95\% quantile & beta 95\% quantile & conditional draws \\
\midrule
0.157189 & 0.295490 & 0.282619 & 160000\\
\bottomrule
\end{tabular}
\caption{Conditional simulation for the Pareto-frontier rule with \(N=20\) and
\(K_N=3\).  The beta quantile is included only as a comparator; it is not the
projective-profile law.}
\label{tab:pareto-conditional-sim}
\end{table}

\begin{figure}[tbp]
  \centering
  \includegraphics[width=.72\linewidth]{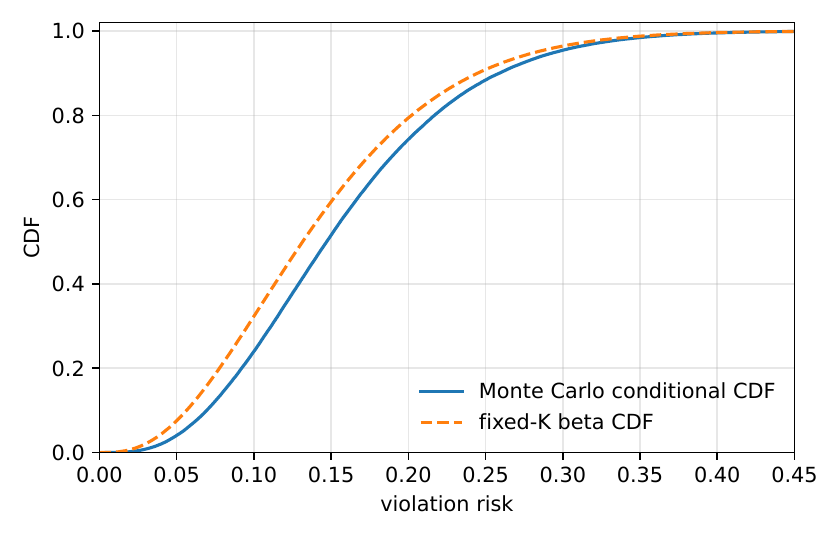}
  \caption{Conditional CDF of the Pareto-frontier violation risk for \(N=20\)
  and \(K_N=3\), estimated by Monte Carlo, compared with the fixed-boundary beta
  law.  The beta curve is visibly shifted to the left and gives a smaller
  95\% quantile.}
  \label{fig:pareto-cdf}
\end{figure}

The numerical evidence confirms that a beta law is exact only when the complexity
profile is stable, for instance in scalar order-statistic calibration or fixed
support-dimension scenario programs.  In random-boundary problems, the observed
value \(K_N=k\) must be interpreted together with the cross-sample profile.

\section{Conclusions}\label{sec:conclusion}

This paper gives an exact finite-sample risk law for proper projective boundaries.  The result identifies a deterministic mechanism behind beta-binomial laws in scenario optimization, split conformal prediction, and related distribution-free certification methods: boundary equivalence and projectivity reduce the risk law to the cross-sample complexity profile.

The classical beta law appears when the boundary size is fixed, or more generally when the complexity profile is stable at the observed value.  In random-boundary settings, including coordinatewise scenario envelopes and Pareto-frontier calibration with vector scores, the profile factor is generally unavoidable.  Treating the observed random boundary size as if it were a fixed dimension can therefore give incorrect conditional risk assessments.

For algorithms with discarded samples, the result gives a diagnostic rather than a replacement for the specialized sample-and-discard theory.  Essential projective discards may be counted in the boundary and certified by the sharp law.  Generic violated discards should instead be handled by established discarding bounds, by a certified inner acceptance set, or by conservative stable-compression inequalities.

From the viewpoint of statistical predictive inference, the main practical message is that sharp conditional certification requires a complexity profile.  This profile may be analytic, structurally bounded, or estimated from a simulator with simultaneous finite-sample bands.  Without such information, the no-go theorem shows that the observed complexity alone is insufficient for nontrivial distribution-free conditional guarantees.

Several directions remain open: tighter structural bounds for admissible profiles, certified quantile computation from finitely many moments, broader classes of multivariate conformal rules with projective boundaries, and multi-risk profile methods that avoid Bonferroni allocation when joint boundary information is available.

\paragraph{Data Availability.}
No external datasets were used.  The numerical values in the figures and tables can be reproduced from the formulas and Monte Carlo procedures described in the paper.

\appendix

\section{Proof of the Bounded-Boundary Corollary}
\label{app:padded-proof}

\begin{proof}[Proof of Corollary~\ref{cor:bounded_boundary}]

The case \(s=0\) gives \(K_n=0\) almost surely for every \(n\ge1\), and is therefore
covered by Corollary~\ref{cor:fixed-size}.  We henceforth assume \(s\ge1\).

The idea is to enlarge the boundary by auxiliary non-boundary points until its
size is exactly \(s\), and then to apply the fixed-size result to a conservative
acceptance rule.  Let
$  \widetilde{\cZ}:=\cZ\times[0,1]$,
and let
$
  \widetilde Z_i=(Z_i,U_i)$,
where the \(U_i\)'s are i.i.d. uniform on \([0,1]\), independent of the
\(Z_i\)'s.  The auxiliary marks are distinct with probability one.  The argument is carried out on this full-measure event; any measurable convention may be used on its complement.

For a finite extended sample
$
  \widetilde T_n=((z_1,u_1),\ldots,(z_n,u_n))$,
write \(T_n=(z_1,\ldots,z_n)\), \(B_n=B_n(T_n)\), and
\(b_n=|B_n|\).  Define the padded boundary map as follows.  If \(n<s\), set
$
  \widetilde B_n(\widetilde T_n):=[n]$.
If \(n\ge s\) and \(b_n\le s\), let \(R_n(\widetilde T_n)\) be the set of the
\(s-b_n\) largest auxiliary marks among the indices in \([n]\setminus B_n\), and
set
$
  \widetilde B_n(\widetilde T_n)
  :=
  B_n(T_n)\cup R_n(\widetilde T_n)$.
If \(n\ge s\) and \(b_n>s\), set
$
  \widetilde B_n(\widetilde T_n):=B_n(T_n)$.
The last case is irrelevant almost surely under the assumptions of the
corollary, but it keeps the rule defined on every deterministic sample.

We now define the associated padded acceptance rule.  For a finite index set
\(I\), write \(T_I\) and \(\widetilde T_I\) for the corresponding subsamples,
and let
$
  B_I:=B_{|I|}(T_I)$, $ b_I:=|B_I|$.
If \(|I|<s\), set
$
  \widetilde\Gamma(\widetilde T_I):=\varnothing$.
If \(|I|\ge s\) and \(b_I>s\), set
$
  \widetilde\Gamma(\widetilde T_I):=\Gamma(T_I)\times[0,1]$.
Finally, if \(|I|\ge s\) and \(b_I\le s\), let
$
  r_I:=s-b_I$.
When \(r_I=0\), set \(\tau_I=1\).  When \(r_I\ge1\), let \(\tau_I\) be the
\(r_I\)-th largest auxiliary mark among the indices in \(I\setminus B_I\).
This is well-defined because \(|I|\ge s\).  Define
\[
  \widetilde\Gamma(\widetilde T_I)
  :=
  \{(z,u): z\in\Gamma(T_I),\ u\le \tau_I\}.
\]
On the full-measure event relevant to the corollary, the padded rule is a conservative version of the original rule.

We verify boundary equivalence and projectivity for the padded scheme.  Fix a
deterministic extended sample \(\widetilde T_n\) and a split
\(I\subseteq[n]\), \(J=[n]\setminus I\).

First suppose \(n<s\).  Then \(\widetilde B_n(\widetilde T_n)=[n]\), so
\(\widetilde B_n(\widetilde T_n)\subseteq I\) holds if and only if \(I=[n]\).
Since \(|I|<s\), we have \(\widetilde\Gamma(\widetilde T_I)=\varnothing\), and
all held-out points are accepted if and only if \(J=\varnothing\), again
equivalently \(I=[n]\).  Boundary equivalence follows, and projectivity is
trivial in this case.

Now assume \(n\ge s\).  We first consider the case \(b_n>s\).  Then
$
  \widetilde B_n(\widetilde T_n)=B_n(T_n)$.
If all held-out extended points are accepted by
\(\widetilde\Gamma(\widetilde T_I)\), then in particular all their
\(z\)-components are accepted by \(\Gamma(T_I)\).  Boundary equivalence for the
original scheme gives \(B_n(T_n)\subseteq I\), hence
\(\widetilde B_n(\widetilde T_n)\subseteq I\).  Conversely, if
\(\widetilde B_n(\widetilde T_n)\subseteq I\), then \(B_n(T_n)\subseteq I\).
Original projectivity gives
\[
  B_{|I|}(T_I)=B_n(T_n),
\]
after the natural re-indexing, and hence \(b_I=b_n>s\).  Therefore
$
  \widetilde\Gamma(\widetilde T_I)=\Gamma(T_I)\times[0,1]$.
Original boundary equivalence gives \(z_j\in\Gamma(T_I)\) for all \(j\in J\),
so all held-out extended points are accepted.  This proves boundary equivalence
when \(b_n>s\).  Projectivity in this case follows immediately from original
projectivity, since whenever \(\widetilde B_n(\widetilde T_n)\subseteq I\),
\[
  \widetilde B_{|I|}(\widetilde T_I)
  =
  B_{|I|}(T_I)
  =
  B_n(T_n)
  =
  \widetilde B_n(\widetilde T_n).
\]

It remains to consider the case \(n\ge s\) and \(b_n\le s\).  Put
\[
  C:=B_n(T_n), \qquad A:=[n]\setminus C, \qquad r:=s-|C|,
\]
and let \(R:=R_n(\widetilde T_n)\) be the set of the \(r\) largest auxiliary
marks among the indices in \(A\).  Thus
$
  \widetilde B_n(\widetilde T_n)=C\cup R$.
Suppose first that
$
  \widetilde B_n(\widetilde T_n)\subseteq I$.
Then \(C\subseteq I\), so original boundary equivalence gives
$
  z_j\in\Gamma(T_I)$, $j\in J$.
Original projectivity also gives
$
  B_{|I|}(T_I)=C$.
Since \(R\subseteq I\), the \(r\) largest auxiliary marks among the full
non-boundary set \(A\) are all retained in the design set.  Hence every omitted
non-boundary point \(j\in A\setminus I\) has auxiliary mark at most the
threshold \(\tau_I\).  Therefore
$
  (z_j,u_j)\in\widetilde\Gamma(\widetilde T_I)$,
 $ j\in J$.

Conversely, suppose that all held-out extended points are accepted by
\(\widetilde\Gamma(\widetilde T_I)\).  Then all their \(z\)-components are
accepted by \(\Gamma(T_I)\).  By original boundary equivalence,
$
  C=B_n(T_n)\subseteq I$.
By original projectivity,
$
  B_{|I|}(T_I)=C$.
Thus the threshold \(\tau_I\) is the \(r\)-th largest auxiliary mark among the
retained non-boundary indices \(I\setminus C\), with the convention
\(\tau_I=1\) when \(r=0\).  Since the auxiliary marks are distinct on a
full-measure event,
\[
  R\subseteq I
  \quad\Longleftrightarrow\quad
  u_j\le \tau_I
  \ \text{for every } j\in A\setminus I .
\]
Indeed, if one of the \(r\) largest full-sample non-boundary marks were omitted,
then the \(r\)-th largest retained non-boundary mark would be strictly smaller
than the omitted mark.  Since all held-out extended points are accepted, the
right-hand condition holds, and hence \(R\subseteq I\).  Therefore
$
  \widetilde B_n(\widetilde T_n)=C\cup R\subseteq I
$.
This proves boundary equivalence for the padded scheme.

The same argument gives projectivity.  If
$
  \widetilde B_n(\widetilde T_n)=C\cup R\subseteq I$,
then original projectivity gives \(B_{|I|}(T_I)=C\).  Moreover, because all
indices in \(R\) are retained, they remain exactly the \(r=s-|C|\) largest
auxiliary marks among the retained non-boundary indices.  Hence, after the
natural re-indexing,
$
  \widetilde B_{|I|}(\widetilde T_I)
  =
  C\cup R
  =
  \widetilde B_n(\widetilde T_n)
$.
Thus \((\widetilde\Gamma,\widetilde B)\) is a proper projective boundary scheme on the full-measure regularity class described above.

By assumption, for every \(n\ge s\),
$
  K_n\le s
  \qquad\text{almost surely}.
$
Therefore the padded boundary satisfies
$
  |\widetilde B_n(\widetilde S_n)|=s$
 almost surely for every \(n\ge s\).
Applying Corollary~\ref{cor:fixed-size} to the padded scheme gives, for
\(N\ge s\),
$
  \widetilde V_N\sim \Beta(s,N-s+1)$,
where
$
  \widetilde V_N
  :=
  \widetilde{\Pp}\{(Z,U)\notin
  \widetilde\Gamma(\widetilde S_N)\mid \widetilde S_N\}
$
and \(\widetilde{\Pp}\) denotes probability under the product law
\(P\otimes{\rm Unif}[0,1]\).

Finally, for every training sample in the full-measure regularity class with \(N\ge s\),
$
  \widetilde\Gamma(\widetilde S_N)
  \subseteq
  \Gamma(S_N)\times[0,1]$.
Consequently the padded rule is pointwise more conservative, and therefore
\[
\begin{aligned}
  V_N
  &=
  \widetilde{\Pp}\{(Z,U)\notin
  \Gamma(S_N)\times[0,1]\mid \widetilde S_N\}  \\
  &\le
  \widetilde{\Pp}\{(Z,U)\notin
  \widetilde\Gamma(\widetilde S_N)\mid \widetilde S_N\}
  =
  \widetilde V_N .
\end{aligned}
\]
Hence, for every \(\epsilon\in(0,1)\),
\[
  \Pp\{V_N>\epsilon\}
  = \widetilde{\Pp}\{V_N>\epsilon\}
  \le
  \widetilde{\Pp}\{\widetilde V_N>\epsilon\}
  =
  \sum_{i=0}^{s-1}
  \binom{N}{i}\epsilon^i(1-\epsilon)^{N-i},
\]
where the last equality is the standard beta-binomial tail identity for
\(\Beta(s,N-s+1)\).  This proves the claim.
\end{proof}

\section{Cascaded Support Removal as a Projective Boundary}\label{app:cascaded-support}
This appendix places the cascaded support-removal scheme of \cite{RomaoPapachristodoulouMargellos2023} within the projective-boundary framework.  The construction used in the compression proof becomes a proper projective boundary for a certified acceptance set.  The corresponding no-prefactor formula then follows from the fixed-boundary law, while the final optimizer inherits the resulting upper bound.

\subsection{The Cascade}

Consider a convex scenario program
\[
  \widehat x(H)\in\argmin_{x\in X} c^\top x
  \quad\text{subject to}\quad
  g(x,\delta)\le0,\qquad \delta\in H,
\]
where \(H\) is a finite set of scenarios.  As in \cite{RomaoPapachristodoulouMargellos2023}, assume feasibility and a uniquely selected optimizer for every finite scenario set under consideration.  Ordinary uniqueness or a fixed lexicographic rule can provide this selection.  The statements below are understood on the joint full-measure regularity class on which these properties and the fully supported condition hold for every subproblem used by the cascade.

Fix an integer \(\ell\ge0\), let \(d\) be the decision dimension, set
\(r=\ell d\), and set
\[
  \zeta=(\ell+1)d=r+d .
\]
For a sample \(S\) with \(|S|>\zeta\), define the cascade as follows.  Start from
\(S^{(0)}=S\).  At stage \(k=0,\ldots,\ell\), solve the scenario program on
\(S^{(k)}\) and write
$
  \widehat x_k(S):=\widehat x(S^{(k)})$.
Let \(R_k(S)\) be the support set of this stage, namely the scenarios in
\(S^{(k)}\) whose removal changes the selected optimizer.  For
\(k<\ell\), remove these support constraints and set
\[
  S^{(k+1)}=S^{(k)}\setminus R_k(S).
\]
The final set \(R_\ell(S)\) is the support set of the final problem; it is not
removed.  In the fully supported case treated in
\cite[Theorem~3]{RomaoPapachristodoulouMargellos2023}, each \(R_k(S)\) has
cardinality \(d\).  The cascaded support set is
\begin{equation}\label{eq:cascade-boundary}
  C(S):=\bigcup_{k=0}^{\ell} R_k(S),
  \qquad |C(S)|=\zeta.
\end{equation}
This union is the boundary candidate.

For any candidate set \(C\) of cardinality \(\zeta\), run the same cascade on
\(C\).  Define
\[
  A_1(C):=\{\delta:g(\widehat x_\ell(C),\delta)\le0\}
\]
and
\[
  A_3(C):=\bigcup_{k=0}^{\ell-1} R_k(C),
\]
with \(A_3(C)=\emptyset\) when \(\ell=0\).  The set \(A_1(C)\) is the feasible
set of the final optimizer.  The set \(A_3(C)\) adds back the finitely many scenarios removed along the cascade.

The compression proof of \cite[Theorem~3]{RomaoPapachristodoulouMargellos2023}
uses one more set.  For \(k=0,\ldots,\ell\), let the union over previous removed
support sets be empty when \(k=0\), and define
\[
  A_2(C):=\bigcap_{k=0}^{\ell}
  \bigcap_{\substack{J\subset C\setminus\cup_{h=0}^{k-1}R_h(C)\\ |J|=d-1}}
  \{\delta:c^\top \widehat x(J\cup\{\delta\})
        \le c^\top \widehat x_k(C)\}.
\]
Then set
\begin{equation}\label{eq:cascade-certified-set}
  A^{\rm cs}(C):=(A_1(C)\cap A_2(C))\cup A_3(C).
\end{equation}
This is the certified acceptance set in the compression argument.  It may be
smaller than \(A_1(C)\), so its violation probability can be larger than the
violation probability of the final optimizer.  This is why it gives a valid
upper bound for the final optimizer.

For the exact tightness result of
\cite[Theorem~5]{RomaoPapachristodoulouMargellos2023}, an additional assumption
is imposed.  In the notation above, it says that if a scenario \(\delta\) is a
support constraint at any stage \(k\), then \(\delta\) is violated by every
optimizer that could be obtained from any \(d\)-point subset of the remaining
constraints after deleting \(\delta\).  Under that assumption, the set
\(A_2(C)\) is no longer needed and one can use
\begin{equation}\label{eq:cascade-tight-set}
  \overline A(C):=A_1(C)\cup A_3(C).
\end{equation}
Under a non-atomic scenario law, the finite set \(A_3(C)\) has probability zero.  The violation risk of \(\overline A(C)\) then equals the violation risk of the final optimizer \(\widehat x_\ell(C)\).

\subsection{A Compression-to-Projectivity Lemma}

We first state a general deterministic fact which bridges the
compression terminology of \cite{RomaoPapachristodoulouMargellos2023} and the
projective-boundary terminology of this paper.

\begin{lemma}[Unique compression implies projectivity]\label{lem:unique-compression-projective}
Fix \(\zeta\ge1\).  Suppose that, for every finite data set \(T\) with
\(|T|\ge\zeta\), a rule selects a subset \(C(T)\subseteq T\) with
\(|C(T)|=\zeta\).  Suppose also that an acceptance map \(G(C)\) is defined for
every \(\zeta\)-point set \(C\), and that the following two properties hold for
every \(T\) with \(|T|\ge\zeta\):
\begin{enumerate}
\item \emph{consistency:} every point of \(T\) is accepted by \(G(C(T))\);
\item \emph{unique compression:} if \(C\subseteq T\), \(|C|=\zeta\), and every
point of \(T\) is accepted by \(G(C)\), then \(C=C(T)\).
\end{enumerate}
Define \(\Gamma(T)=G(C(T))\) for \(|T|\ge\zeta\), define
\(\Gamma(T)=\emptyset\) for \(|T|<\zeta\), and define
\(B(T)=C(T)\) for \(|T|\ge\zeta\).  Then, for every full sample size
\(n\ge\zeta\), \((\Gamma,B)\) is a proper projective boundary scheme on samples
of size \(n\), and \(|B(T)|=\zeta\).
\end{lemma}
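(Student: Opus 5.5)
The plan is to check Assumptions~\ref{ass:boundary-equivalence} and~\ref{ass:boundary-projectivity} directly from the two hypotheses. Fix \(n\ge\zeta\), a deterministic sample \(T=T_n\), and a split \(I\subseteq[n]\), \(J=[n]\setminus I\). The cardinality claim \(|B(T)|=\zeta\) is immediate from the construction. I will first prove projectivity, because the reverse direction of boundary equivalence then follows from it.

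\emph{Projectivity.} Suppose \(C(T)\subseteq I\). Then \(|I|\ge\zeta\), so \(C(T_I)\) is defined.
\begin{enumerate}
\item By consistency applied to \(T\), every point of \(T\) is accepted by \(G(C(T))\).
\item In particular, every point of the subsample \(T_I\) is accepted by \(G(C(T))\).
\item Since \(C(T)\subseteq T_I\) and \(|C(T)|=\zeta\), unique compression applied to the data set \(T_I\) gives \(C(T_I)=C(T)\).
\end{enumerate}
After re-indexing, this is exactly \(B_{|I|}(T_I)=B_n(T)\).

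\emph{Boundary equivalence, direction \(\Leftarrow\).} If \(B_n(T)\subseteq I\), projectivity gives \(\Gamma(T_I)=G(C(T_I))=G(C(T))\). By consistency for \(T\), every \(z_j\) with \(j\in J\) is then accepted.

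\emph{Boundary equivalence, direction \(\Rightarrow\).} Suppose every \(z_j\), \(j\in J\), lies in \(\Gamma(T_I)\). I will split on the size of \(I\).
\begin{enumerate}
\item If \(|I|<\zeta\), then \(\Gamma(T_I)=\emptyset\), so \(J\) must be empty. This contradicts \(|I|<\zeta\le n\), so the case cannot occur. The same convention makes the left side false whenever \(|I|<\zeta\), and the right side is false too, since \(|C(T)|=\zeta>|I|\). So the equivalence holds trivially for such \(I\).
\item If \(|I|\ge\zeta\), set \(C:=C(T_I)\subseteq I\), with \(|C|=\zeta\). By consistency for \(T_I\), every point of \(T_I\) is accepted by \(G(C)\). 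By hypothesis, every point of \(T_J\) is accepted by \(G(C)\) as well, so every point of \(T\) is accepted by \(G(C)\).
\end{enumerate}
In the second case, unique compression applied to \(T\) gives \(C=C(T)\). Hence \(B_n(T)=C(T_I)\subseteq I\).

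The main obstacle is bookkeeping rather than mathematics. Unique compression has to be applied to the correct data set in each direction: the subsample \(T_I\) for projectivity, and the full sample \(T\) for the forward direction of equivalence. The subsets must also be treated as index sets, with duplicates handled by the paper's tie-breaking and regularity-class conventions. The degenerate case \(|I|<\zeta\) must be handled through the convention \(\Gamma=\emptyset\).
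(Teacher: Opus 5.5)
Your proof is correct and follows essentially the same route as the paper's. Unique compression applied to the retained sample $T_I$ gives $C(T_I)=C(T_n)$ whenever $C(T_n)\subseteq I$, which yields projectivity and, via consistency, the reverse implication; unique compression applied to the full sample, after combining consistency on $T_I$ with acceptance of the held-out points, gives the forward implication. The only difference is cosmetic: you prove projectivity first and derive the reverse direction from it, whereas the paper proves boundary equivalence first and notes that the same argument gives projectivity.
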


\begin{proof}
Fix a deterministic full sample \(T_n\), \(n\ge\zeta\), and a split
\(I\subseteq[n]\), \(J=[n]\setminus I\).  Write \(T_I\) for the retained sample.
If \(|I|<\zeta\), then \(B(T_n)\subseteq I\) is impossible, while
\(\Gamma(T_I)=\emptyset\) and \(J\neq\emptyset\), so the left side of boundary
equivalence is also false.  Hence assume \(|I|\ge\zeta\).

First suppose every omitted point \(T_J\) is accepted by \(\Gamma(T_I)\).  By
consistency, every retained point \(T_I\) is also accepted by \(\Gamma(T_I)
=G(C(T_I))\).  Hence every point of the full sample \(T_n\) is accepted by
\(G(C(T_I))\).  Since \(C(T_I)\subseteq T_I\subseteq T_n\) and has cardinality
\(\zeta\), unique compression for the full sample gives
\(C(T_I)=C(T_n)\).  Therefore \(B(T_n)=C(T_n)\subseteq I\).

Conversely, suppose \(B(T_n)=C(T_n)\subseteq I\).  Consistency for the full
sample says that every point of \(T_n\), and therefore every point of \(T_I\),
is accepted by \(G(C(T_n))\).  Since \(C(T_n)\subseteq T_I\) and has cardinality
\(\zeta\), unique compression for the retained sample gives
\(C(T_I)=C(T_n)\).  Thus
\[
  \Gamma(T_I)=G(C(T_I))=G(C(T_n))=\Gamma(T_n).
\]
Consistency for the full sample now implies that every omitted point in \(T_J\)
is accepted by \(\Gamma(T_I)\).  This proves boundary equivalence.

The same argument also proves projectivity.  Whenever \(B(T_n)\subseteq I\), we
have just shown that \(C(T_I)=C(T_n)\), which is exactly
\(B(T_I)=B(T_n)\) after the natural re-indexing.
\end{proof}

\subsection{Application to the Romao--Papachristodoulou--Margellos Cascade}

The proof of \cite[Theorem~3]{RomaoPapachristodoulouMargellos2023} establishes that, under feasibility, unique selection, and the fully supported condition, the set
\(C(S)\) in \eqref{eq:cascade-boundary} is the unique compression set of
cardinality \(\zeta=(\ell+1)d\) for the map \(A^{\rm cs}\) in
\eqref{eq:cascade-certified-set}.  In the notation of
Lemma~\ref{lem:unique-compression-projective}, take
\[
  G(C)=A^{\rm cs}(C),
  \qquad B(S)=C(S).
\]
The lemma gives the following conclusion.

\begin{proposition}[Cascaded support boundary]\label{prop:cascaded-support-boundary}
Under the feasibility, unique-selection, and fully supported conditions of \cite[Theorem~3]{RomaoPapachristodoulouMargellos2023}, the cascaded support set \(C(S)\) is, on the regularity class described above, a proper projective boundary of fixed size \(\zeta=r+d\) for the certified acceptance set \(A^{\rm cs}(C(S))\).
Consequently, for every \(N>r+d\), if \(V_N^{\rm cs}\) is the violation risk of this certified set,
then
\[
  V_N^{\rm cs}\sim \Beta(r+d,N-r-d+1)
\]
and
\begin{equation}\label{eq:cascade-beta-bound}
  \Pp\{V_N^{\rm cs}>\eps\}
  =
  \sum_{i=0}^{r+d-1}\binom{N}{i}\eps^i(1-\eps)^{N-i}.
\end{equation}
If the scenario law is non-atomic, the final optimizer's violation risk \(V_N^{\rm fin}\) satisfies
\[
  V_N^{\rm fin}\le V_N^{\rm cs}.
\]
Hence
\eqref{eq:cascade-beta-bound} recovers the feasibility bound of
\cite[Theorem~3]{RomaoPapachristodoulouMargellos2023}.
\end{proposition}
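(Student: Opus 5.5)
The plan is to get the proposition from Lemma~\ref{lem:unique-compression-projective}, then Corollary~\ref{cor:fixed-size}, and finish with a pointwise set inclusion.

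\emph{Step 1: the lemma's hypotheses.} Take $G=A^{\rm cs}$ and $C(\cdot)$ the cascade~\eqref{eq:cascade-boundary}, which has exactly $\zeta=r+d$ elements on the fully supported regularity class.

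Consistency requires every point of $S$ to lie in $A^{\rm cs}(C(S))$. First, running the cascade on $C(S)$ reproduces the same stage sets $R_k$ and optimizers $\widehat x_k$. This holds because each stage optimizer is reconstructed from its support set, and confirmed-addition stability applies at each stage. Given that, points removed at some stage belong to $A_3$. Points surviving to the end are feasible for $\widehat x_\ell$, hence in $A_1$. They also lie in $A_2$, since at each stage $k$ they are among the constraints of $S^{(k)}$. Any optimizer $\widehat x(J\cup\{\delta\})$ over a subset of those constraints therefore has cost at most $c^\top\widehat x_k$, because the subproblem is a relaxation.

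Unique compression is exactly what the proof of \cite[Theorem~3]{RomaoPapachristodoulouMargellos2023} establishes. I will quote it, restricted to the joint regularity class; since only finitely many subcollections occur at each sample size, this is legitimate per Section~\ref{sec:setup}. The lemma then gives a proper projective boundary scheme with $K_n=\zeta$ for all $n\ge\zeta$. For $|T|<\zeta$ we use the lemma's convention $\Gamma=\emptyset$.

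\emph{Step 2: the exact law.} Since $p_n(\zeta)=1$ for every $n\ge\zeta$, Corollary~\ref{cor:fixed-size} gives $V_N^{\rm cs}\sim\Beta(r+d,N-r-d+1)$ for $N\ge\zeta$, in particular for $N>r+d$. The tail formula~\eqref{eq:cascade-beta-bound} is the beta-binomial identity already used in Corollary~\ref{cor:profile-stable-beta}.

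\emph{Step 3: domination of the final optimizer.} We have $A^{\rm cs}\subseteq A_1\cup A_3$, and $A_3$ is a finite set of sampled scenarios. Under a non-atomic law, $P(A_3)=0$, so
\[
P(A^{\rm cs})\le P(A_1)=1-V_N^{\rm fin},
\]
which gives $V_N^{\rm fin}\le V_N^{\rm cs}$ and hence the stated tail bound for $V_N^{\rm fin}$.

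The main obstacle is Step 1. Consistency, and the claim that the cascade run on $C(S)$ reproduces the same stages, need careful stagewise induction using unique selection and fully supportedness at every stage. For unique compression I will lean on the cited Theorem~3 proof rather than re-derive it. I will only check that its hypotheses hold simultaneously for all subcollections in the exchangeability argument, which is where the regularity-class caveat enters.
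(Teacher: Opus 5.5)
Your proposal is correct and follows the paper's route: the cited unique-compression result combined with Lemma~\ref{lem:unique-compression-projective}, then Corollary~\ref{cor:fixed-size} for the fixed size \(r+d\), then the inclusion \(A^{\rm cs}\subseteq A_1\cup A_3\) with \(P(A_3)=0\) under non-atomicity. The one difference is that you check consistency yourself by a stagewise argument showing that the cascade run on \(C(S)\) reproduces the same stage optimizers, whereas the paper folds consistency into the citation; that argument also justifies identifying \(A_1(C(S))\) with the final optimizer's feasible set, which the paper's proof uses without stating it.
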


\begin{proof}
The cited unique-compression result and Lemma~\ref{lem:unique-compression-projective} give the proper-projective claim on the full-measure regularity class.  The boundary size is fixed and equal to \(r+d\), so Corollary~\ref{cor:fixed-size} gives the beta law and the beta-binomial tail
\eqref{eq:cascade-beta-bound}.

Finally, by construction,
\[
  A^{\rm cs}(C)=(A_1(C)\cap A_2(C))\cup A_3(C)\subseteq A_1(C)\cup A_3(C).
\]
Under a non-atomic scenario law, the finite set \(A_3(C)\) has probability zero.  The certified set consequently has no larger acceptance probability than the final optimizer's feasible set, and its violation risk is no smaller.
\end{proof}

Under the additional tightness assumption used in
\cite[Theorem~5]{RomaoPapachristodoulouMargellos2023}, the same reasoning
applies to the larger set \(\overline A\) in \eqref{eq:cascade-tight-set}.
Indeed, \cite[Theorem~5]{RomaoPapachristodoulouMargellos2023} proves that
\(C(S)\) is the unique compression set of cardinality \(\zeta\) for
\(\overline A\).  Lemma~\ref{lem:unique-compression-projective} then gives a
proper projective boundary for \(\overline A(C(S))\).  Under a non-atomic scenario law, \(\overline A(C(S))\) differs from the feasible set of the final optimizer only by the zero-probability set \(A_3(C(S))\).  Its risk therefore equals the usual scenario violation risk, and, for every \(N>r+d\), Corollary~\ref{cor:fixed-size} gives
\[
  \Pp\{V_N^{\rm fin}>\eps\}
  =
  \sum_{i=0}^{r+d-1}\binom{N}{i}\eps^i(1-\eps)^{N-i},
\]
which is the equality statement in
\cite[Theorem~5]{RomaoPapachristodoulouMargellos2023}.

\section{A Deterministic Verification Lemma}\label{app:primitive}

The following lemma packages a useful set of sufficient conditions for boundary equivalence.

\begin{lemma}[Primitive conditions for boundary equivalence]\label{lem:primitive}
Fix a deterministic data set \(T_n\).  Suppose a boundary map \(B_n(T_n)\)
satisfies:
\begin{enumerate}
\item \emph{confirmed-addition stability:} if \(I\subseteq K\subseteq[n]\) and
\(z_j\in\Gamma(T_I)\) for all \(j\in K\setminus I\), then
\(\Gamma(T_K)=\Gamma(T_I)\);
\item \emph{boundary reconstruction:}
\(\Gamma(T_{B_n})=\Gamma(T_{[n]})\);
\item \emph{outside-boundary feasibility:}
\(z_j\in\Gamma(T_{[n]})\) for every \(j\notin B_n(T_n)\);
\item \emph{minimality:} if
\(\Gamma(T_{[n]\setminus\{j\}})=\Gamma(T_{[n]})\), then
\(j\notin B_n(T_n)\).
\end{enumerate}
Then boundary equivalence \eqref{eq:boundary-equivalence} holds.
\end{lemma}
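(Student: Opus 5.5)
The plan is to prove the two implications in \eqref{eq:boundary-equivalence} separately for the fixed data set \(T_n\) and an arbitrary split \(I\subseteq[n]\), \(J=[n]\setminus I\). Every step applies the four hypotheses only to subcollections of the same \(T_n\). The argument is the abstract form of the proof of Proposition~\ref{prop:scenario-boundary}, with \(\Gamma\) in place of the certified object \(a\).

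\emph{Forward direction.} Assume \(z_j\in\Gamma(T_I)\) for every \(j\in J\).
\begin{enumerate}
\item Fix \(j\in J\) and set \(K=[n]\setminus\{j\}\), so that \(I\subseteq K\).
\item Every index in \(K\setminus I\subseteq J\) is accepted by \(\Gamma(T_I)\). Confirmed-addition stability therefore gives \(\Gamma(T_{[n]\setminus\{j\}})=\Gamma(T_I)\).
\item Applying confirmed-addition stability once more with \(K=[n]\) gives \(\Gamma(T_{[n]})=\Gamma(T_I)\).
\item Hence \(\Gamma(T_{[n]\setminus\{j\}})=\Gamma(T_{[n]})\), and minimality yields \(j\notin B_n(T_n)\).
\end{enumerate}
Since \(j\in J\) was arbitrary, \(B_n(T_n)\subseteq I\).

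\emph{Reverse direction.} Assume \(B:=B_n(T_n)\subseteq I\).
\begin{enumerate}
\item Boundary reconstruction gives \(\Gamma(T_B)=\Gamma(T_{[n]})\).
\item Outside-boundary feasibility says every \(z_j\) with \(j\notin B\) lies in \(\Gamma(T_{[n]})=\Gamma(T_B)\). In particular, every index in \(I\setminus B\) is accepted by \(\Gamma(T_B)\).
\item Applying confirmed-addition stability with the pair \(B\subseteq I\) in place of \(I\subseteq K\) yields \(\Gamma(T_I)=\Gamma(T_B)=\Gamma(T_{[n]})\).
\item Each \(j\in J\) satisfies \(j\notin B\), so outside-boundary feasibility gives \(z_j\in\Gamma(T_{[n]})=\Gamma(T_I)\).
\end{enumerate}
This is the left side of \eqref{eq:boundary-equivalence}.

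I do not expect a genuine obstacle here. The only points needing care are bookkeeping. Confirmed-addition stability is used with the boundary \(B\) itself as the smaller index set, so it must hold for every nested pair of subcollections, as the lemma states. Degenerate cases, such as \(B=\emptyset\) or \(J=\emptyset\), must be read with the convention that \(T_\emptyset\) is a legitimate input to \(\Gamma\). I would check these edge cases last.
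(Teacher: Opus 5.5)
Your proposal is correct and follows the paper's proof step for step. The forward direction uses confirmed-addition stability to identify \(\Gamma(T_{[n]\setminus\{j\}})\), \(\Gamma(T_I)\) and \(\Gamma(T_{[n]})\) and then invokes minimality; the reverse direction combines reconstruction, outside-boundary feasibility, and confirmed-addition stability applied from \(B_n(T_n)\) to \(I\). Your single application of stability to the nested pair \(B_n(T_n)\subseteq I\) replaces the paper's ``repeated'' application, and this is legitimate because the hypothesis is stated for arbitrary nested pairs \(I\subseteq K\).
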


\begin{proof}
Let \(I\subseteq[n]\) and \(J=[n]\setminus I\).  First suppose all
\(z_j\), \(j\in J\), are accepted by \(\Gamma(T_I)\).  Confirmed-addition
stability gives \(\Gamma(T_{[n]})=\Gamma(T_I)\).  For any \(j\in J\), adding
the points in \(J\setminus\{j\}\) also leaves the decision unchanged, so
\(\Gamma(T_{[n]\setminus\{j\}})=\Gamma(T_{[n]})\).  Minimality implies
\(j\notin B_n(T_n)\).  Thus \(B_n(T_n)\subseteq I\).

Conversely, suppose \(B_n(T_n)\subseteq I\).  By reconstruction,
\(\Gamma(T_{B_n})=\Gamma(T_{[n]})\).  Every point in \(I\setminus B_n\) is
outside the boundary and is accepted by \(\Gamma(T_{[n]})\).  Repeated
confirmed-addition stability gives \(\Gamma(T_I)=\Gamma(T_{[n]})\).  If
\(j\in J\), then \(j\notin B_n\), so outside-boundary feasibility gives
\(z_j\in\Gamma(T_{[n]})=\Gamma(T_I)\).
\end{proof}

\section{Inner Certificates}\label{app:inner}

An algorithm that falls outside the boundary-equivalence framework may still admit a proper inner certificate.

\begin{proposition}[Proper inner certificate]\label{prop:inner}
Fix \(1\le s\le N\).  Suppose the algorithm returns \(\Gamma(S_N)\), and suppose there is a certified inner set \(\Gamma^-(S_N)\subseteq\Gamma(S_N)\) satisfying the fixed-size proper boundary assumptions with boundary size \(s\).  Let \(V_N\) and \(V_N^-\) be
the violation risks of \(\Gamma(S_N)\) and \(\Gamma^-(S_N)\).  Then
\[
  \Pp\{V_N>\eps\}
  \le
  \sum_{i=0}^{s-1}\binom{N}{i}\eps^i(1-\eps)^{N-i}.
\]
\end{proposition}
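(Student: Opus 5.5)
The argument is a pointwise comparison followed by an appeal to the fixed-size law. The first step is a deterministic inclusion of violation events. For every sample \(S_N\), the inclusion \(\Gamma^-(S_N)\subseteq\Gamma(S_N)\) gives
\[
  \{z: z\notin\Gamma(S_N)\}\subseteq\{z: z\notin\Gamma^-(S_N)\}.
\]
Integrating against \(P\), conditionally on \(S_N\), then yields \(V_N\le V_N^-\) pointwise. This uses only the measurability of both acceptance sets, which the standing conventions of Section~\ref{sec:setup} guarantee.

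The second step identifies the law of \(V_N^-\). The hypothesis says that \(\Gamma^-\), together with some boundary map \(B^-\), forms a proper projective boundary scheme with \(|B^-_n|=s\) almost surely. I read this as holding for every \(n\ge s\), and not only at \(n=N\), because that is what Corollary~\ref{cor:fixed-size} requires. The reason is that Theorem~\ref{thm:exact-profile} evaluates the profile at sample sizes \(N+M\).

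Under that reading, Corollary~\ref{cor:fixed-size} gives \(V_N^-\sim\Beta(s,N-s+1)\), since \(s\ge1\). The beta-binomial identity then gives
\[
  \Pp\{V_N^->\eps\}=\sum_{i=0}^{s-1}\binom{N}{i}\eps^i(1-\eps)^{N-i}.
\]
If the hypothesis only controls the size as an upper bound, \(|B^-_n|\le s\), I would use Corollary~\ref{cor:bounded_boundary} in place of Corollary~\ref{cor:fixed-size}; it delivers the same right-hand side as an inequality.

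The third step combines the two. From \(\{V_N>\eps\}\subseteq\{V_N^->\eps\}\) we get \(\Pp\{V_N>\eps\}\le\Pp\{V_N^->\eps\}\), which is the claim.

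The main obstacle is not computational but interpretive. One must make sure the phrase "fixed-size proper boundary assumptions" is read so that boundary equivalence and projectivity hold for \(\Gamma^-\) at every augmented sample size. Only then does the exchangeability argument of Theorem~\ref{thm:exact-profile} apply to the inner set. The inclusion \(\Gamma^-\subseteq\Gamma\), by contrast, is needed only at size \(N\).
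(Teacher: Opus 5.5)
Your proposal is correct and matches the paper's proof: the inclusion \(\Gamma^-(S_N)\subseteq\Gamma(S_N)\) gives \(V_N\le V_N^-\) pointwise, Corollary~\ref{cor:fixed-size} gives \(V_N^-\sim\Beta(s,N-s+1)\), and the beta-binomial tail identity finishes the argument. Your reading that the fixed-size proper-boundary hypothesis on \(\Gamma^-\) must hold at every sample size \(n\ge s\), and not only at \(n=N\), is the right one; the paper leaves this implicit.
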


\begin{proof}
Since \(\Gamma^-(S_N)\subseteq\Gamma(S_N)\), \(V_N\le V_N^-\) pointwise.  The
fixed-boundary law gives \(V_N^-\sim\Beta(s,N-s+1)\), and the beta tail gives
the claim.
\end{proof}

\bibliographystyle{plainnat}
\bibliography{biblio}

\end{document}